\DeclarePairedDelimiter{\ceil}{\lceil}{\rceil}
\newtheorem{theorem}{Theorem}[section]
\theoremstyle{definition}
\newtheorem{definition}{Definition}[section]
\begin{document}

\title{Parallel Weighted Model Counting with Tensor Networks}
\author{
Jeffrey M. Dudek\and
Moshe Y. Vardi
}
\institute{Rice University\\
  \email{\{jmd11, vardi\}@rice.edu}}

\authorrunning{J. M. Dudek and M. Y. Vardi}
\titlerunning{Parallel Weighted Model Counting with Tensor Networks}

\maketitle


  
\newcommand{\domain}[1]{[#1]}				
\newcommand{\dlabel}[1]{\ell(#1)} 				

\newcommand{\restrict}[1]{\big|_{#1}}		

\newcommand{\tdim}[1]{\mathcal{I}(#1)}		
\newcommand{\rank}[1]{rank(#1)}				
\newcommand{\size}[1]{size(#1)}				

\newcommand{\tntensor}[1]{\mathcal{T}(#1)}	
\newcommand{\tnfree}[1]{\mathcal{F}(#1)}	
\newcommand{\tnbound}[1]{\mathcal{B}(#1)}	
\newcommand{\struct}[1]{\text{struct}(#1)}	

\newcommand{\edge}[1]{\{#1\}}				
\newcommand{\vinc}[2]{\delta_{#1}(#2)}		
\newcommand{\vincf}[1]{\delta_{#1}}	        
\newcommand{\einc}[2]{\epsilon_{#1}(#2)}	
\newcommand{\eincf}[1]{\epsilon_{#1}}	    
\newcommand{\vincs}[2]{\delta_{#1}[#2]}		
\newcommand{\eincs}[2]{\epsilon_{#1}[#2]}	
\newcommand{\E}[1]{\mathcal{E}(#1)}			
\newcommand{\V}[1]{\mathcal{V}(#1)}			
\newcommand{\Lv}[1]{\mathcal{L}(#1)}		
\newcommand{\C}[3]{\mathcal{C}_{#1,#2}(#3)}	
\newcommand{\paritions}[1]{\mathcal{P}(#1)}
\newcommand{\Line}[1]{Line(#1)}

\newcommand{\blocks}[1]{\mathcal{B}(#1)}
\newcommand{\Ind}[0]{\mathbf{Ind}}
\newcommand{\pkg}[1]{\texttt{#1}}
\newcommand{\tool}[1]{\texttt{#1}}

\newcommand{\fv}[0]{z}
\newcommand{\copyt}[0]{\text{COPY}}

\newcommand{\support}[1]{\text{sup}(#1)}
\newcommand{\shortcite}[1]{\cite{#1}} 


\begin{abstract}
A promising new algebraic approach to weighted model counting makes use of tensor networks, following a reduction from weighted model counting to tensor-network contraction. Prior work has focused on analyzing the single-core performance of this approach, and demonstrated that it is an effective addition to the current portfolio of weighted-model-counting algorithms.

In this work, we explore the impact of multi-core and GPU use on tensor-network contraction for weighted model counting. To leverage multiple cores, we implement a parallel portfolio of tree-decomposition solvers to find an order to contract tensors. To leverage a GPU, we use \tool{TensorFlow} to perform the contractions. We compare the resulting weighted model counter on 1914 standard weighted model counting benchmarks and show that it significantly improves the virtual best solver.

\end{abstract}
\section{Introduction}
In \emph{weighted model counting}, the task is to count the total weight, subject to a given weight function, of the set of solutions of input constraints. This fundamental task has applications in probabilistic reasoning, planning, inexact computing, engineering reliability, and statistical physics \cite{Bacchus2003,DH07,GSS08}. The development of model counters that can successfully compute the total weight on large industrial formulas is an area of active research \cite{OD15,Thurley2006}. Although most model counters focus on single-core performance, there have been several parallel model counters, notably the multi-core (unweighted) model counter \tool{countAntom} \cite{BSB15} and the GPU-based weighted model counter \tool{gpuSAT2} \cite{FHWZ18,FHZ19}. 

The parallelization of neural network training and inference has seen massive research across the machine learning and high-performance computing communities \cite{ABCCDDDGII16,JYPPABBBBB17,PGMLJGKLGA19}. Consequently, GPUs give orders of magnitude of speedup over a single core for neural-network algorithms \cite{KSTKPPRS19,NRBHHJN15}. In this work, we aim to directly leverage advances in multi-core and GPU performance designed for neural-network algorithms in the service of weighted model counting. 

\emph{Tensor networks} provide a natural bridge between high-performance computing and weighted model counting. Tensor networks are a tool used across computer science for reasoning about big-data processing, quantum systems, and more \cite{BB17,Cichocki14,Orus14}. A tensor network describes a complex tensor as a computation on many simpler tensors, and the problem of tensor-network \emph{contraction} is to perform this computation. Contraction is a key operation in neural network training and inference \cite{BK07,Hirata03,KKCLA17,VZTGDMVAC18}, and, as such, many of the optimizations for neural networks also apply to tensor-network contraction \cite{KSTKPPRS19,NRBHHJN15,RMGZFZHVL19}.

Moreover, recent work \cite{BMT15,DDV19} has shown that tensor-network contraction can be used to perform weighted model counting using a 3-stage algorithm. First, in the \emph{reduction} stage, a counting instance is reduced to a tensor-network problem. Second, in the \emph{planning} stage, an order to contract tensors in the network is determined. Finally, in the \emph{execution} stage, tensors in the network are contracted according to the best discovered order. The resulting weighted model counter was shown in \cite{DDV19} to be useful as part of a portfolio of existing model counters when evaluated on a single core. In this work, we explore the impact of multiple-core and GPU use on tensor network contraction for weighted model counting. 

The planning stage in \cite{DDV19} was done using a choice of several single-core heuristic tree-decomposition solvers \cite{AMW17,HS18,Tamaki17}. There is little recent work on parallelizing heuristic tree-decomposition solvers. Instead, we implement a parallel portfolio of single-core tree-decomposition solvers and find that this portfolio significantly improves planning on multiple cores. Similar portfolio approaches have been well-studied and shown to be beneficial in the context of SAT solvers \cite{BSS15,XHHL08}. As a theoretical contribution, we prove that branch-decomposition solvers can also be included in this portfolio. Unfortunately, we find that a current branch-decomposition solver does not significantly improve the portfolio.

The execution stage in \cite{DDV19} was done using \tool{numpy} \cite{numpy} and evaluated on a single core. We add an implementation of the execution stage that uses \tool{TensorFlow} \cite{ABCCDDDGII16} to leverage a GPU for large contractions. Since GPU memory is significantly more limited than CPU memory, we add an implementation of \emph{index slicing}. Index slicing is a recent technique from the tensor-network community \cite{CZHNS18,GK20,VBNHRBM19}, analogous to the classic technique of conditioning in Bayesian Network reasoning \cite{darwiche01,dechter99,pearl86,SAS94}, that allows memory to be significantly reduced at the cost of additional time. We find that, while multiple cores do not significantly improve the contraction stage, a GPU provides significant speedup, especially when augmented with index slicing.

We implement our techniques in \tool{TensorOrder2}, a new parallel weighted model counter. We compare \tool{TensorOrder2} to a variety of state-of-the-art counters. We show that the improved \tool{TensorOrder2} is the fastest counter on 11\% of benchmarks after preprocessing \cite{LM14}, outperforming the GPU-based counter \tool{gpuSAT2} \cite{FHZ19}. Thus \tool{TensorOrder2} is useful as part of a portfolio of counters. All code and data are available at  \url{https://github.com/vardigroup/TensorOrder}.

The rest of the paper is organized as follows: in Section~\ref{sec:prelim}, we give background information on weighted model counting, graph decompositions, and tensor networks. In Section~\ref{sec:algorithm}, we discuss the algorithm for performing weighted model counting using tensor networks as outlined by \cite{DDV19}. In Section~\ref{sec:parallel}, we describe parallelization of this algorithm and two related theoretical improvements: planning with branch decompositions, and index slicing. In Section~\ref{sec:experiments}, we implement this parallelization in \tool{TensorOrder2} and analyze its performance experimentally.

\section{Preliminaries}
\label{sec:prelim}

In this section, we give background information on the three concepts we combine in this paper: weighted model counting, graph decompositions, and tensor-network contraction. 

\subsection{Literal-Weighted Model Counting}
\label{sec:wmc}
The task in weighted model counting is to count the total weight, subject to a given weight function, of the set of solutions of input constraints (typically given in CNF). We focus on so-called \emph{literal-weight functions}, where the weight of a solution can be expressed as the product of weights associated with all satisfied literals. Formally:
\begin{definition}[Weighted Model Count]
  Let $\varphi$ be a formula over Boolean variables $X$ and let $W: X \times \{0,1\} \rightarrow \mathbb{R}$ be a function. The \emph{(literal-)weighted model count} of $\varphi$ w.r.t. $W$ is
  $W(\varphi) \equiv \sum_{\tau \in \domain{X}} \varphi(\tau) \cdot \prod_{x \in X} W(x, \tau(x)),$ where $[X]$ is the set of all functions from $X$ to $\{0, 1\}$.
\end{definition}

We focus in this work on weighted model counting, as opposed to \emph{unweighted model counting} where the weight function $W$ is constant. There are a variety of counters \cite{CW16,FHMW17,Thurley2006} that can perform only unweighted model counting and so we do not compare against them. Of particular note here is \tool{countAntom} \cite{BTB15}, a multi-core unweighted model counter. An interesting direction for future work is to explore the integration of weights into \tool{countAntom} and compare with tensor-network-based approaches to weighted model counting.

Existing approaches to weighted model counting can be split broadly into three categories: \emph{direct reasoning}, \emph{knowledge compilation}, and \emph{dynamic programming}. In counters based on direct reasoning (e.g., \tool{cachet} \cite{SBK05}), the idea is to reason directly about the CNF representation of $\varphi$. In counters based on knowledge compilation (e.g. \tool{miniC2D} \cite{OD15} and \tool{d4} \cite{LM17}), the idea is to compile $\varphi$ into an alternative representation on which counting is easy. In counters based on dynamic programming (e.g. \tool{ADDMC} \cite{DPV20} and \tool{gpuSAT2} \cite{FHWZ18,FHZ19}), the idea is to traverse the clause structure of $\varphi$. Tensor-network approaches to counting (e.g. \tool{TensorOrder} \cite{DDV19} and this work) are also based on dynamic programming. Dynamic programming approaches often utilize graph decompositions, which we define in the next section. 


\subsection{Graphs and Graph Decompositions}
A \emph{graph} $G$ has a nonempty set of vertices $\V{G}$, a set of (undirected) edges $\E{G}$, a function $\delta_G: \V{G} \rightarrow 2^{\E{G}}$ that gives the set of edges incident to each vertex, and a function $\epsilon_G: \E{G} \rightarrow 2^{\V{G}}$ that gives the set of vertices incident to each edge. . Each edge must be incident to exactly two vertices, but multiple edges can
exist between two vertices. If $E \subset \E{G}$, let $\eincs{G}{E} = \bigcup_{e \in E} \einc{G}{e}$. 

A \emph{tree} is a simple, connected, and acyclic graph. A \emph{leaf} of a tree $T$ is a vertex of degree one, and we use $\Lv{T}$ to denote the set of leaves of $T$. For every edge $a$ of $T$, deleting $a$ from $T$ yields exactly two trees, whose leaves define a partition of $\Lv{T}$. Let $C_a \subseteq \Lv{T}$ denote an arbitrary element of this partition. A \emph{rooted binary tree} is a tree $T$ where either $|\V{T}| = 1$ or every vertex of $T$ has degree one or three except a single vertex of degree two (called the \emph{root}). If $|\V{T}| > 1$, the \emph{immediate subtrees of $T$} are the two rooted binary trees that are the connected components of $T$ after the root is removed.

In this work, we use three decompositions of a graph as a tree: tree decompositions \cite{RS91}, branch decompositions \cite{RS91}, and carving decompositions \cite{ST94}. All decompose the graph into an \emph{unrooted binary tree}, which is a tree where every vertex has degree one or three. First, we define tree decompositions \cite{RS91}:
\begin{definition} 
	A \emph{tree decomposition} for a graph $G$ is an unrooted binary tree $T$ together with a labeling function $\chi : \V{T} \rightarrow 2^{\V{G}}$ such that: 
	(1) $\bigcup_{n \in \V{T}} \chi(n) = \V{G}$,
	(2) for all $e \in \E{G}$, there exists $n \in \V{T}$ s.t. $\einc{G}{e} \subseteq \chi(n)$, and 
	(3) for all $n, o, p \in \V{T}$, if $p$ is on the path from $n$ to $o$ then $\chi(n) \cap \chi(o) \subseteq \chi(p)$.
	
	
	The \emph{width} of a tree decomposition is $width_t(T, \chi) \equiv \max_{n \in \V{T}} | \chi(n) | - 1.$
\end{definition}
The treewidth of a graph $G$ is the lowest width among all tree decompositions. Next, we define branch decompositions \cite{RS91}:
\begin{definition}
\label{def:branch}
	A \emph{branch decomposition} for a graph $G$ with $\E{G} \neq \emptyset$ is an unrooted binary tree $T$ whose leaves are the edges of $G$, i.e. $\Lv{T} = \E{G}$. 
	
    The \emph{width} of $T$, denoted $width_b(T)$, is the maximum number of vertices in $G$ that are endpoints of edges in both $C_a$ and $\E{G} \setminus C_a$ for all $a \in \E{T}$, i.e.,
	$width_b(T) \equiv \max_{a \in \E{T}} \left| \eincs{G}{C_a} \cap \eincs{G}{\E{G} \setminus C_a} \right|.$
\end{definition}

The branchwidth of a graph $G$ is the lowest width among all branch decompositions. Carving decompositions are the dual of branch decompositions and hence can be defined by swapping the role of $\V{G}$ and $\E{G}$ in Definition \ref{def:branch}.

The treewidth (plus 1) of graph is no smaller than the branchwidth and is bounded from above by $3/2$ times the branchwidth \cite{RS91}. 

Given a CNF formula $\varphi$, a variety of associated graphs have been considered. The \emph{incidence graph} of $\varphi$ is the bipartite graph where both variables and clauses are vertices and edges indicate that the variable appears in the connected clause. The \emph{primal graph} of $\varphi$ is the graph where variables are vertices and edges indicate that two variables appear together in a clause. There are fixed-parameter tractable model counting algorithms with respect to the treewidth of the incidence graph and the primal graph \cite{SS10}. If the treewidth of the primal graph of a formula $\varphi$ is $k$, the treewidth of the incidence graph of $\varphi$ is at most $k+1$ \cite{KV00}.
 


\subsection{Tensors, Tensor Networks, and Tensor-Network Contraction}
\emph{Tensors} are a generalization of vectors and matrices to higher dimensions-- a tensor with $r$ dimensions is a table of values each labeled by $r$ indices. 

Fix a set $\Ind$ and define an \emph{index} to be an element of $\Ind$. For each index $i$ fix a finite set $\domain{i}$ called the \emph{domain} of $i$. An index is analogous to a variable in constraint satisfaction. 
An \emph{assignment} to $I \subseteq \Ind$ is a function $\tau$ that maps each index $i \in I$ to an element of $\domain{i}$. Let $\domain{I}$ denote the set of assignments to $I$, i.e., $\domain{I} = \{\tau: I \rightarrow \bigcup_{i \in I} \domain{i}~\text{s.t.}~\tau(i) \in \domain{i}~\text{for all}~i \in I\}.$


We now formally define tensors as multidimensional arrays of values, indexed by assignments:
\begin{definition}[Tensor] \label{def:tensor}
	A \emph{tensor} $A$ over a finite set of indices (denoted $\tdim{A}$) is a function $A: \domain{\tdim{A}} \rightarrow \mathbb{R}$.
\end{definition}

The \emph{rank} of a tensor $A$ is the cardinality of $\tdim{A}$. The memory to store a tensor (in a dense way) is exponential in the rank. For example, a scalar is a rank 0 tensor, a vector is a rank 1 tensor, and a matrix is a rank 2 tensor. Some other works generalize Definition \ref{def:tensor} by replacing $\mathbb{R}$ with an arbitrary semiring. 

A \emph{tensor network} defines a complex tensor by combining a set of simpler tensors in a principled way. This is analogous to how a database query defines a resulting table in terms of a computation across many tables.

\begin{definition}[Tensor Network]
	\label{def:tensor-contraction-network}
	A \emph{tensor network} $N$ is a nonempty set of tensors across which no index appears more than twice.
\end{definition}
The set of indices of $N$ that appear once (called \emph{free indices}) is denoted by $\tnfree{N}$. The set of indices of $N$ that appear twice (called \emph{bond indices}) is denoted by $\tnbound{N}$. 
%


The problem of \emph{tensor-network contraction}, given an input tensor network $N$, is to compute the \emph{contraction} of $N$ by marginalizing all bond indices:
\begin{definition}[Tensor-Network Contraction]
The \emph{contraction} of a tensor network $N$ is a tensor $\tntensor{N}$ with indices $\tnfree{N}$ (the set of free indices of $N$), i.e. a function $\tntensor{N} : \domain{\tnfree{N}} \rightarrow \mathbb{R}$, that is defined for all $\tau \in \domain{\tnfree{N}}$ by
		\begin{equation}
        \label{eqn:contraction} 
        \tntensor{N}(\tau) \equiv \sum_{\rho \in \domain{\tnbound{N}}} \prod_{A \in N} A((\rho \cup \tau)\restrict{\tdim{A}}).
        \end{equation}
\end{definition}

A tensor network $N'$ is a \emph{partial contraction} of a tensor network $N$ if there is a surjective function $f: N \rightarrow N'$ s.t. for every $A \in N'$ we have $\tntensor{f^{-1}(A)} = A$; that is, if every tensor in $N'$ is the contraction of some tensors of $N$. If $N'$ is a partial contraction of $N$, then $\tntensor{N'} = \tntensor{N}$.

 Let $A$ and $B$ be tensors. Their \emph{contraction} $A \cdot B$ is the contraction of the tensor network $\{A, B\}$. If $\tdim{A} = \tdim{B}$, their \emph{sum} $A+B$ is the tensor with indices $\tdim{A}$ whose entries are given by the sum of the corresponding entries in $A$ and $B$.

A tensor network can also be seen as a variant of a factor graph \cite{KFL01} with the additional restriction that no variable appears more than twice. The contraction of a tensor network corresponds to the marginalization of a factor graph \cite{RS17}, which is a a special case of the sum-of-products problem \cite{BDP09,dechter99} and the FAQ problem \cite{KNR16}. The restriction on variable appearance is heavily exploited in tools for tensor-network contraction and in this work, since it allows tensor contraction to be implemented as matrix multiplication and so leverage significant work in high-performance computing on matrix multiplication on CPUs \cite{LHKK77} and GPUs \cite{FSH04}.
\section{Weighted Model Counting with Tensor Networks}
\label{sec:algorithm}
In this section, we discuss the algorithm for literal-weighted model counting using tensor networks as outlined by \cite{DDV19}. This algorithm is presented as Algorithm \ref{alg:wmc-tn} and has three stages.

First, in the \emph{reduction} stage the input formula $\varphi$ and weight function $W$ is transformed into a tensor network $N$. We discuss in more detail in Section \ref{sec:algorithm:reduction}.

Second, in the \emph{planning} stage a plan for contracting the tensor network $N$ is determined. This plan takes the form of a \emph{contraction tree} \cite{EP14}:
\begin{definition}[Contraction Tree] \label{def:contraction-tree}
	Let $N$ be a tensor network. A \emph{contraction tree} for $N$ is a rooted binary tree $T$ whose leaves are the tensors of $N$. 
\end{definition}
The planning stage is allowed to modify the input tensor network $N$ as long as the new tensor network $M$ contracts to an identical tensor. We discuss various heuristics for contraction trees in Section \ref{sec:algorithm:planning}.
Planning in Algorithm \ref{alg:wmc-tn} is an anytime process: we heuristically generate better contraction trees until one is ``good enough'' to use. The trade-off between planning and executing is governed by a parameter $\alpha \in \mathbb{R}$, which we determine empirically in Section \ref{sec:experiments}. 

\begin{algorithm}[t]
	\caption{Weighted Model Counting with Tensor Networks}\label{alg:wmc-tn}
	\hspace*{\algorithmicindent} \textbf{Input:} A CNF formula $\varphi$, weight function $W$, and performance factor $\alpha \in \mathbb{R}$. \\
	\hspace*{\algorithmicindent} \textbf{Output:} $W(\varphi)$, the weighted model count of $\varphi$ w.r.t. $W$.
	\begin{algorithmic}[1]
	    \State $N \gets \Call{Reduce}{\varphi, W}$
	    \Repeat
	    \State $M, T \gets \Call{Plan}{N}$
	    \Until{$\alpha \cdot \Call{TimeCost}{M, T} < \text{elapsed time in seconds}$}
	    \State \Return $\Call{Execute}{M, T}(\emptyset)$
	\end{algorithmic}
\end{algorithm}

Third, in the \emph{execution} stage the chosen contraction tree is used to contract the tensor network. We discuss this algorithm in more detail in Section \ref{sec:algorithm:execution}.

We assert the correctness of Algorithm \ref{alg:wmc-tn} in the following theorem.
\begin{theorem}
\label{thm:alg-correctness}
Let $\varphi$ be a CNF formula and let $W$ be a weight function. 
    Assume:
    (1) $\Call{Reduce}{\varphi, W}$ returns a tensor network $N$ s.t. $\tntensor{N}(\emptyset) = W(\varphi)$,
    (2) $\Call{Plan}{N}$ returns a tensor network $M$ and a contraction tree $T$ for $M$ s.t. $\tntensor{M} = \tntensor{N}$, and
    (3) $\Call{Execute}{M, T}$ returns $\tntensor{M}$ for all tensor networks $M$ and contraction trees $T$ for $M$.
Then Algorithm \ref{alg:wmc-tn} returns $W(\varphi)$.
\end{theorem}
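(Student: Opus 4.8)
The plan is to prove the theorem by a direct chain of equalities that rewrites the value returned by Algorithm~\ref{alg:wmc-tn} into $W(\varphi)$, invoking each of the three hypotheses exactly once. First I would pin down precisely what the algorithm returns: once the \textbf{repeat} loop exits, the variables $M$ and $T$ hold the tensor network and contraction tree produced by the \emph{last} call to $\Call{Plan}{N}$, and the algorithm returns the scalar $\Call{Execute}{M, T}(\emptyset)$. The key observation is that the loop guard only controls \emph{when} the loop stops, not \emph{which} properties $M$ and $T$ enjoy: every call to $\Call{Plan}{N}$ satisfies hypothesis~(2), so whichever $M, T$ survive to the return statement necessarily satisfy $\tntensor{M} = \tntensor{N}$, independent of the iteration count.

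Next I would carry out the rewriting on the final $M, T$. Applying hypothesis~(3) gives $\Call{Execute}{M, T} = \tntensor{M}$ as tensors, and in particular $\Call{Execute}{M, T}(\emptyset) = \tntensor{M}(\emptyset)$. Applying hypothesis~(2) gives $\tntensor{M} = \tntensor{N}$, hence $\tntensor{M}(\emptyset) = \tntensor{N}(\emptyset)$. Finally, hypothesis~(1) yields $\tntensor{N}(\emptyset) = W(\varphi)$. Composing these three identities shows the returned value equals $W(\varphi)$, as claimed.

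Before chaining, I would check two points of well-definedness. The symbol $\emptyset$ must be a legal argument to each tensor, i.e. $\tntensor{N}$ and $\tntensor{M}$ must be rank-$0$ (scalar) tensors; this holds because hypothesis~(1) already evaluates $\tntensor{N}$ at $\emptyset$, forcing $\tnfree{N} = \emptyset$, and then hypothesis~(2) forces $\tnfree{M} = \tnfree{N} = \emptyset$ as well. I would also remark that the loop does exit with well-defined $M, T$: the elapsed time grows without bound across iterations while $\Call{TimeCost}{M, T}$ is finite, so the guard $\alpha \cdot \Call{TimeCost}{M, T} < \text{elapsed time}$ eventually holds.

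The main obstacle here is conceptual rather than technical, since the argument contains no real computation. The only genuine risk is mismatching the object the algorithm returns (the scalar $\Call{Execute}{M, T}(\emptyset)$) with the tensor-valued quantities appearing in the hypotheses, or overlooking that the loop-exit values still satisfy hypothesis~(2). Keeping this bookkeeping straight is essentially the entire content of the proof.
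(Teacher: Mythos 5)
Your proposal is correct and follows essentially the same argument as the paper: chain the three hypotheses, using Assumption~3 to identify the returned value with $\tntensor{M}(\emptyset)$, Assumption~2 to replace it with $\tntensor{N}(\emptyset)$, and Assumption~1 to conclude $W(\varphi)$. The extra bookkeeping you include (loop termination, rank-$0$ well-definedness, and that the loop-final $M,T$ satisfy Assumption~2) is sound but not part of the paper's one-line proof.
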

\begin{proof}
By Assumption 3, Algorithm \ref{alg:wmc-tn} returns $\tntensor{M}(\emptyset)$. By Assumption 2, this is equal to $\tntensor{N}(\emptyset)$, which by Assumption 1 is exactly $W(\varphi)$.
\end{proof}

Organizationally, note that we discuss the execution stage in Section \ref{sec:algorithm:execution} before we discuss the planning stage. This is because we must understand how plans are used before we can evaluate various planning algorithms.

\subsection{The Reduction Stage}
\label{sec:algorithm:reduction}
There is a reduction from weighted model counting to tensor-network contraction, as described by the following theorem of \cite{DDV19}:
\begin{figure}[t]
	\centering
	\begin{tikzpicture}
\begin{scope}[every node/.style={thick,draw}]
    \node (C) at (-1,-1) {$B_{\neg x \lor \neg y}$};
    \node (x) at (-1,0) {$A_x$};
    \node (A) at (-1.5,1) {$B_{w \lor x \lor \neg y}$};
    \node (y) at (0,0) {$A_y$};
    \node (w) at (0,1) {$A_w$};
    \node (D) at (1,-1) {$B_{\neg y \lor \neg z}$};
    \node (z) at (1,0) {$A_z$};
    \node (B) at (1.5,1) {$B_{w \lor y \lor z}$};
\end{scope}

\begin{scope}[every node/.style={fill=white,circle},
              every edge/.style={draw=black,very thick}]
    \path [-] (A) edge (w);
    \path [-] (A) edge (x);
    \path [-] (A) edge (y);
    \path [-] (B) edge (w);
    \path [-] (B) edge (y);
    \path [-] (B) edge (z);
    \path [-] (C) edge (x);
    \path [-] (C) edge (y);
    \path [-] (D) edge (y);
    \path [-] (D) edge (z);
\end{scope}
\end{tikzpicture}
	\caption{\label{fig:wmc-example} The tensor network produced by Theorem \ref{thm:wmc-reduction} on $\varphi = (w \lor x \lor \neg y) \land (w \lor y \lor z) \land (\neg x \lor \neg y) \land (\neg y \lor \neg z)$ has 8 tensors (indicated by vertices) and 10 indices (indicated by edges). The $A_*$ tensors correspond to variables and have entries computed from the weight function.  The $B_*$ tensors correspond to clauses and have entries computed from the clause negations.}
\end{figure}
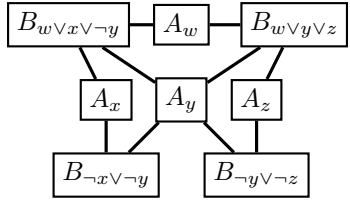
\begin{theorem}
\label{thm:wmc-reduction}
Let $\varphi$ be a CNF formula over Boolean variables $X$ and let $W$ be a weight function. One can construct in polynomial time a tensor network $N_{\varphi,W}$ such that $\tnfree{N_{\varphi,W}} = \emptyset$ and the contraction of $N_{\varphi,W}$ is $W(\varphi)$.
\end{theorem}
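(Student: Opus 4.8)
The plan is to realize $W(\varphi)$ directly as a contraction by building a tensor network with one index per variable--clause incidence, a ``variable gadget'' for each $x \in X$, and a ``clause gadget'' for each clause $c$ of $\varphi$. Concretely, for every pair $(x,c)$ with $x$ occurring in $c$ I would introduce a Boolean bond index $i_{x,c}$ (domain $\{0,1\}$), shared between the tensor for $x$ and the tensor for $c$; assuming WLOG that each clause is a set of literals, so no variable occurs twice in one clause, each $i_{x,c}$ appears exactly twice, making $N_{\varphi,W}$ a legal tensor network with $\tnfree{N_{\varphi,W}} = \emptyset$. The variable tensor $A_x$ is a \emph{weighted $\copyt$ tensor}: on the assignment sending every index $i_{x,c}$ to a common value $b \in \{0,1\}$ it takes value $W(x,b)$, and it is $0$ on every non-constant assignment (when $x$ occurs in no clause, $A_x$ degenerates to the scalar $W(x,0)+W(x,1)$). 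The clause tensor $B_c$ is the indicator that $c$ is satisfied: it equals $1$ except on the single assignment making all its literals false, where it is $0$ (this is the ``clause negation'' alluded to in Figure~\ref{fig:wmc-example}).

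Correctness then follows by expanding Equation~\eqref{eqn:contraction}. Since $\tnfree{N_{\varphi,W}}=\emptyset$, the contraction is the scalar $\sum_{\rho} \prod_{A \in N_{\varphi,W}} A(\rho\restrict{\tdim{A}})$ taken over all assignments $\rho$ to the bond indices. The $\copyt$ structure of the $A_x$ tensors annihilates every term in which two copies $i_{x,c}, i_{x,c'}$ of the same variable disagree, so the surviving $\rho$ correspond bijectively to Boolean assignments $\tau$ of the variables occurring in $\varphi$: each $A_x$ then contributes $W(x,\tau(x))$, each $B_c$ contributes $c(\tau) \in \{0,1\}$, and any variable absent from $\varphi$ supplies its scalar factor $W(x,0)+W(x,1)$. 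The product over clauses is exactly $\varphi(\tau)$, so the sum collapses to $\sum_{\tau \in \domain{X}} \varphi(\tau)\prod_{x} W(x,\tau(x)) = W(\varphi)$, matching the definition.

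The step I expect to be the real obstacle is the \emph{polynomial-time} claim, since the construction above is correct but not efficient: a variable occurring in many clauses yields a $\copyt$ tensor of high rank, and a wide clause yields a high-rank $B_c$, both exponentially large to store densely. To fix this I would replace each high-rank tensor by a small network of constant-rank tensors that contracts to it. A rank-$(d{+}1)$ $\copyt$ tensor factors into a tree of $O(d)$ rank-$3$ $\copyt$ tensors, to one free leg of which I attach the rank-$1$ weight vector $(W(x,0),W(x,1))$; the OR-indicator $B_c$ of a width-$w$ clause factors into a chain of $O(w)$ rank-$3$ tensors threading a running ``satisfied-so-far'' bit along internal bond indices. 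Each replacement introduces only fresh internal indices (each appearing twice), so the global object remains a tensor network, and by the partial-contraction principle stated in the excerpt its contraction is unchanged. The resulting $N_{\varphi,W}$ then has $O(|X| + \sum_c |c|)$ tensors of bounded rank, hence polynomial total size, which establishes the time bound.
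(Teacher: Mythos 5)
Your core construction is exactly the paper's: the same incidence index set $I=\{(x,C): x \text{ appears in } C\}$, the same weighted-$\copyt$ variable tensors $A_x$, the same clause-indicator tensors $B_C$, and the same collapse-the-sum correctness argument. (You also resolve an edge case the paper's sketch leaves ambiguous: for a variable occurring in no clause, the ``always $0$'' and ``always $1$'' conditions both hold vacuously for the empty assignment; your reading of $A_x$ as the scalar $W(x,0)+W(x,1)$ is the right one.) Where you genuinely diverge is the polynomial-time claim. The paper leaves the high-rank $A_x$ and $B_C$ unfactored in the reduction output; the factoring you describe---a tree of rank-$3$ $\copyt$ tensors with a weight vector attached, and a chain threading a ``satisfied-so-far'' bit for wide clauses---is essentially what the paper defers to the planning stage (Theorem~\ref{thm:factorable-tree}, which exploits tree-factorability and chooses the factoring to match a low-width decomposition). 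Your eager factoring makes the polynomial-size claim airtight even under dense tensor representation, which the paper's sketch glosses over (implicitly relying on these tensors being succinct: a $\copyt$ tensor has two nonzero entries, and $B_C$ is all-ones minus one entry). The trade-off is downstream rather than logical: the paper later uses that $\struct{N_{\varphi,W}}$ is precisely the incidence graph of $\varphi$ and that its tensors remain tree-factorable, so factoring at reduction time with a fixed tree shape would satisfy Theorem~\ref{thm:wmc-reduction} as stated but forfeit the planner's freedom to pick the factoring; either version proves the theorem.
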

\begin{proof}[Sketch of Construction] Let $I = \{(x, C) \in X \times \varphi: x~\text{appears in}~C\}$ be a set of indices, each with domain $\{0,1\}$. The key idea is create a tensor $A_x$ for each variable $x \in X$ and a tensor $B_C$ for each clause $C \in \varphi$ so that each index $(x, C) \in I$ is an index of $A_x$ and $B_C$.

For each $x \in X$, let $A_x$ be a tensor with indices $I \cap (\{x\} \times \varphi)$. For each $\tau \in \domain{I \cap (\{x\} \times \varphi)}$, define $A_x(\tau)$ to be $W(x,0)$ if $\tau$ is always 0, $W(x, 1)$ if $\tau$ is always 1, and 0 otherwise.

For each $C \in \varphi$, let $B_C$ be a tensor with indices $I \cap (X \times \{C\})$. For each $\tau \in \domain{I \cap (X \times \{C\})}$, define $B_C(\tau)$ to be $1$ if $\{x \in X : x~\text{appears in}~C~\text{and}~\tau((x, C)) = 1\}$ satisfies $C$ and 0 otherwise. 



Then $N_{\varphi,W} = \{A_x : x \in X\} \cup \{B_C : C \in \varphi\}$ is the desired tensor network.
\end{proof}

Theorem \ref{thm:wmc-reduction} proves that $\Call{Reduce}{\varphi,W} = N_{\varphi, W}$ satisfies assumption 1 in Theorem \ref{thm:alg-correctness}. See Figure \ref{fig:wmc-example} for an example of the reduction. This reduction is closely related to the formulation of model counting as the marginalization of a factor graph representing the constraints (but assigns tensors to variables as well, not just clauses). 
This reduction can also be extended to other types of constraints, e.g. parity or cardinality constraints.

\subsection{The Execution Stage}
For formulas $\varphi$ with hundreds of clauses, the corresponding tensor network produced by Theorem \ref{thm:wmc-reduction} has hundreds of bond indices. Directly following Equation \ref{eqn:contraction} in this case is infeasible, since it sums over an exponential number of terms. 

Instead, Algorithm \ref{alg:network-contraction} shows how to compute $\tntensor{N}$ for a tensor network $N$ using a contraction tree $T$ as a guide. The key idea is to repeatedly choose two tensors $A_1, A_2 \in N$ (according to the structure of $T$) and contract them. One can prove inductively that Algorithm \ref{alg:network-contraction} satisfies assumption 3 of Theorem \ref{thm:alg-correctness}.


\label{sec:algorithm:execution}
\begin{algorithm}[t]
	\caption{Recursively contracting a tensor network}\label{alg:network-contraction}
	\textbf{Input:} A tensor network $N$ and a contraction tree $T$ for $N$. \\
	\textbf{Output:} $\tntensor{N}$, the contraction of $N$.
	\begin{algorithmic}[1]
	    \Procedure{Execute}{$N,T$}
		\If {$\left|N\right| = 1$}
		\State \Return the tensor contained in $N$
		\Else
        \State $T_1, T_2 \gets \text{immediate subtrees of}~T$
		\State \Return $\Call{Execute}{\Lv{T_1}, T_1} \cdot \Call{Execute}{\Lv{T_2}, T_2}$
		\EndIf
		\EndProcedure
	\end{algorithmic}
\end{algorithm}

Each contraction in Algorithm \ref{alg:network-contraction} contains exactly two tensors and so can be implemented as a matrix multiplication. In \cite{DDV19}, \tool{numpy} \cite{numpy} was used to perform each of these contractions. 
Although the choice of contraction tree does not affect the correctness of Algorithm \ref{alg:network-contraction}, it may have a dramatic impact on the running-time and memory usage. We explore this further in the following section.

\subsection{The Planning Stage}
\label{sec:algorithm:planning}
The task in the planning stage is, given a tensor network, to find a contraction tree that minimizes the computational cost of Algorithm \ref{alg:network-contraction}.

Several \emph{cost-based} approaches \cite{Hirata03,PHV14} aim for minimizing the total number of floating point operations to perform Algorithm \ref{alg:network-contraction}. Most cost-based approaches are designed for tensor networks with very few, large tensors, whereas weighted model counting benchmarks produce tensor networks with many, small tensors. These approaches are thus inappropriate for counting.

Instead, we focus here on \emph{structure-based} approaches, which analyze the rank of intermediate tensors that appear during Algorithm \ref{alg:network-contraction}. The ranks indicate the amount of memory and computation required at each recursive stage. The goal is then to find a contraction tree with small \emph{max-rank}, which is the maximum rank of all tensors that appear during Algorithm \ref{alg:network-contraction}.

This is done through analysis of the \emph{structure graph}, a representation of a tensor network as a graph where tensors are vertices and indices indicate the edges between tensors~\cite{DDV19,MS08,Ying17}:
\begin{definition}[Structure Graph]\label{def:structure}
	Let $N$ be a tensor network. The \emph{structure graph} of $N$ is the graph, denoted $G = \struct{N}$, where $\V{G} = N \cup \{\fv\}$ ($\fv$ is a fresh vertex called the \emph{free vertex}), $\E{G} = \tnbound{N} \cup \tnfree{N}$, $\vinc{G}{\fv} = \tnfree{N}$, and, for all $A \in N$, $\vinc{G}{A} = \tdim{A}$.
	
\end{definition}


For example, if $\varphi$ is a formula and $N_{\varphi,W}$ is the tensor network resulting from Theorem \ref{thm:wmc-reduction}, then $\struct{N_{\varphi,W}}$ is the incidence graph of $\varphi$ (and is independent of the weight function).

One structure-based approach for finding contraction trees \cite{DFGHSW18,MS08} utilizes low-width tree decompositions of the line graph of $\struct{N}$. This approach is analogous to \emph{variable elimination} on factor graphs \cite{BDP09,KDLD05}, which uses tree decompositions of the primal graph of a factor graph.

Unfortunately, for many tensor network all possible contraction trees have large max-rank. \cite{DDV19} observed that this is the case for many standard counting benchmarks. This is because, for each variable $x$, the number of times $x$ appears in $\varphi$ is a lower bound on the max-rank of all contraction trees of $\Call{Reduce}{\varphi, \cdot}$. Thus traditional tensor network planning algorithms (e.g., \cite{DFGHSW18,KCMR18,MS08}) fail on counting benchmarks that have variables which appear more than $30$ times. 

Instead, \cite{DDV19} relaxed the planning stage by allowing changes to the input tensor network $N$. In particular, \cite{DDV19} used tree decompositions of $\struct{N}$ to factor high-rank, highly-structured tensors as a preprocessing step, following prior methods that split large CNF clauses \cite{SS10_2} and high-degree graph vertices \cite{oliveira18,MS11}. A tensor is highly-structured here if it is \emph{tree-factorable}:
\begin{definition} \label{def:tree-factorable}
A tensor $A$ is \emph{tree factorable} if, for every tree $T$ whose leaves are $\tdim{A}$, there is a tensor network $N_A$ and a bijection $g_A: \V{T} \rightarrow N_A$ s.t. $A$ is the contraction of $N_A$ and:
\begin{enumerate}\itemsep0em 
\item $g_A$ is an isomorphism between $T$ and $\struct{N_A}$ with the free vertex removed,
\item for every index $i$ of $A$, $i$ is an index of $g_A(i)$, and
\item for some index $i$ of $A$, every index $j \in \tnbound{N_A}$ satisfies $|\domain{j}| \leq |\domain{i}|$. 
\end{enumerate}
\end{definition}
Informally, a tensor is tree-factorable if it can be replaced by arbitrary trees of low-rank tensors. All tensors produced by Theorem \ref{thm:wmc-reduction} are tree factorable. 
The preprocessing method of \cite{DDV19} is then formalized as follows:
\begin{theorem} \label{thm:factorable-tree}
Let $N$ be a tensor network of tree-factorable tensors such that $|\tnfree{N}| \leq 3$ and $\struct{N}$ has a tree decomposition of width $w \geq 1$. Then we can construct in polynomial time a tensor network $M$ and a contraction tree $T$ for $M$ s.t. $\text{max-rank}(T) \leq \ceil{4(w+1)/3}$ and $N$ is a partial contraction of $M$.
\end{theorem}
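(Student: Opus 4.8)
The plan is to interpret a contraction tree as a carving-style decomposition of the structure graph, to reduce the given tree decomposition to a branch decomposition, to use tree-factorability to split every tensor of $N$ into rank-three tensors guided by that branch decomposition, and finally to read a low-max-rank contraction tree off of the branch decomposition.

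First I would record the dictionary between contraction trees and graph decompositions. For any network $M$ and contraction tree $T$ for $M$, each internal node of $T$ contracts the tensors in its subtree into an intermediate tensor whose rank is exactly the number of indices of $M$ that have an occurrence both inside and outside that subtree; equivalently, attaching the free vertex $\fv$ of $\struct{M}$ as one more leaf turns $T$ into a carving decomposition of $\struct{M}$ whose width equals $\text{max-rank}(T)$. Thus it suffices to construct $M$ (with $N$ a partial contraction of $M$) together with a carving decomposition of $\struct{M}$ of width at most $\ceil{4(w+1)/3}$.

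Next I would build $M$. The only obstruction to small carving width is a high-degree vertex of $\struct{N}$: a tensor of rank $d$ forces every contraction tree of $N$ to have max-rank at least $d$, which is why we must split. Using the relationship $\text{branchwidth} \le \text{treewidth}+1$ of \cite{RS91}, convert the width-$w$ tree decomposition of $\struct{N}$ into a branch decomposition $T_B$ of $\struct{N}$ with $width_b(T_B) \le w+1$. For each tensor $A \in N$ let $T_A$ be the minimal subtree of $T_B$ spanning the leaves $\tdim{A}$ (with degree-two vertices suppressed); since $A$ is tree-factorable, the definition applied to $T_A$ yields a network $N_A$ of rank-three tensors, isomorphic to $T_A$, that contracts to $A$. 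Set $M = \bigcup_{A \in N} N_A$; then every vertex of $\struct{M}$ has degree at most three, and mapping all tensors of each $N_A$ to $A$ is a surjection $f : M \to N$ witnessing that $N$ is a partial contraction of $M$ (so $\tntensor{M} = \tntensor{N}$). Every step here is polynomial.

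Finally I would bound the width. Because each $N_A$ embeds into $T_B$, the single tree $T_B$ organizes all of $M$: an original index of $N$ lives at a leaf of $T_B$ and therefore crosses no internal edge, while each bond index created by factoring $A$ sits on the edge of $T_B$ it was placed on, so the indices of $M$ crossing any edge $a$ of $T_B$ are exactly one per tensor whose spanning subtree uses $a$, a number bounded by $width_b(T_B) \le w+1$. Rooting $T_B$, attaching $\fv$, and resolving the several $M$-tensors that sit at each internal node of $T_B$ into a binary order produces the desired contraction tree $T$. The hard part is the local accounting at the degree-three meeting points: merging the two incoming bundles of size at most $w+1$ together with the rank-three tensors parked at that node can, under a careless order, raise an intermediate rank to nearly $2(w+1)$, so one must choose a balanced local order that exploits the degree-three structure to keep every intermediate rank at most $\ceil{4(w+1)/3}$ (note $w \ge 1$ makes this bound at least $3$, matching the rank of the factored tensors, and the degree-$|\tnfree{N}| \le 3$ free vertex contributes only a low-order term absorbed into the same estimate). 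This degree-three, three-way carving estimate is the main obstacle; verifying the partial-contraction claim and the polynomial running time is routine.
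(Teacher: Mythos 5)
Your route is the same one the paper takes: here Theorem~\ref{thm:factorable-tree} is obtained by converting the width-$w$ tree decomposition into a branch decomposition of width at most $w+1$ via \cite{RS91} and then invoking the branch-decomposition analogue, Theorem~\ref{thm:factorable-branch}, whose proof factors each tensor along its minimal spanning subtree of the branch decomposition and reads a carving decomposition of the factored network off that same tree --- exactly your first two steps. Your construction of $M$, the partial-contraction witness, and the crossing bound (at most one index of $M$ crosses an arc per tensor of $N$ whose spanning subtree uses that arc, hence at most $w+1$) are all sound and match Parts~1--2 of the appendix proof.

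The genuine gap is that you never prove the bound that is the actual content of the theorem. You correctly observe that a careless ordering at a degree-three node of the branch decomposition can push an intermediate rank to roughly $2(w+1)$, and then assert that ``a balanced local order'' achieves $\ceil{4(w+1)/3}$ --- but that balancing argument \emph{is} the proof, and it is absent. What is needed (Parts~3--4 of the proof of Theorem~\ref{thm:factorable-branch}) is: (i) at each internal node $n$, the parked tensors of degree three are exactly those $A$ whose spanning subtree branches at $n$; since such a subtree uses every arc incident to $n$, there are at most $w+1$ of them --- this follows from your own crossing bound, but you never draw the conclusion, and without it balancing buys nothing; (ii) these degree-three tensors must be split evenly into three groups of size at most $\ceil{(w+1)/3}$, one group attached as a path of leaves along each incident arc; and (iii) the accounting that a cut along an arc, after some parked tensors have been moved across it, has width at most the pure crossing number plus exactly one per moved degree-three tensor (its single bond edge toward the far side stops crossing while its two bond edges toward the near side start crossing) and nothing per moved lower-degree tensor suitably assigned, giving $(w+1) + \ceil{(w+1)/3} = \ceil{4(w+1)/3}$. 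Steps (i)--(iii) are where the $4/3$ constant comes from; declaring them ``the main obstacle'' identifies the right difficulty but does not discharge it, so as written the proposal establishes only the easy parts of the theorem.
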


This gives us the $\Call{FactorTree}{N}$ planning method, which computes a tree decomposition of $\struct{N}$ and returns the $M$ and $T$ that result from Theorem \ref{thm:factorable-tree}. Because $N$ is a partial contraction of $M$, $\tntensor{N} = \tntensor{M}$. Since all tensors produced by Thereom \ref{thm:wmc-reduction} are tree-factorable, $\Call{FactorTree}{N}$ satisfies assumption 2 in Theorem \ref{thm:alg-correctness}. 

While both $\Call{FactorTree}{N}$ and variable elimination \cite{BDP09,KDLD05} use tree decompositions, they consider different graphs. For example, consider computing the model count of $\psi = \left(\lor_{i=1}^n x_i\right) \land \left(\lor_{i=1}^n \neg x_i\right)$. $\Call{FactorTree}{N}$ uses tree decompositions of the incidence graph of $\psi$, which has treewidth 2. Variable elimination uses tree decompositions of the primal graph of $\psi$, which has treewidth $n-1$. Thus $\Call{FactorTree}{N}$ can exhibit significantly better behavior than variable elimination on some formulas. On the other hand, the behavior of $\Call{FactorTree}{N}$ is at most slightly worse than variable elimination: as noted above, if the treewidth of the primal graph of a formula $\varphi$ is $k$, the treewidth of the incidence graph of $\varphi$ is at most $k+1$ \cite{KV00}. 
\section{Parallelizing Tensor-Network Contraction}
\label{sec:parallel}

In this section, we discuss opportunities for parallelization of Algorithm \ref{alg:wmc-tn}. Since the reduction stage was not a significant source of runtime in \cite{DDV19}, we focus on opportunities in the planning and execution stages.

\subsection{Parallelizing the Planning Stage with a Portfolio}
\label{sec:parallel:planning}
As discussed in Section \ref{sec:algorithm:planning}, $\Call{FactorTree}{N}$ first computes a tree decomposition of $\struct{N}$ and then applies Theorem \ref{thm:factorable-tree}. In \cite{DDV19}, most time in the planning stage was spent finding low-width tree decompositions with a choice of single-core heuristic tree-decomposition solvers \cite{AMW17,HS18,Tamaki17}. Finding low-width tree decompositions is thus a valuable target for parallelization.

We were unable to find state-of-the-art parallel heuristic tree-decomposition solvers. There are several classic algorithms for parallel tree-decomposition construction \cite{Lagergren90,SWG13}, but no recent implementations. There is also an exact tree-decomposition solver that leverages a GPU \cite{VB17}, but exact tools are not able to scale to handle our benchmarks. Existing single-core heuristic tree-decomposition solvers are highly optimized and so parallelizing them is nontrivial.

Instead, we take inspiration from the SAT solving community. One competitive approach to building a parallel SAT solver is to run a variety of SAT solvers in parallel across multiple cores \cite{BSS15,MSSS13,XHHL08}. The portfolio SAT solver \tool{CSHCpar} \cite{MSSS13} won the open parallel track in the 2013 SAT Competition with this technique, and such portfolio solvers are now banned from most tracks due to their effectiveness relative to their ease of implementation. Similar parallel portfolios are thus promising for integration into the planning stage.

We apply this technique to heuristic tree-decomposition solvers by running a variety of single-core heuristic tree-decomposition solvers in parallel on multiple cores and collating their outputs into a single stream. We analyze the performance of this technique in Section \ref{sec:experiments}.

While Theorem \ref{thm:factorable-tree} lets us integrate tree-decomposition solvers into the portfolio, the portfolio might also be improved by integrating solvers of other graph decompositions. The following theorem shows that branch-decomposition solvers may also be used for $\Call{FactorTree}{N}$:
\begin{theorem} \label{thm:factorable-branch}
Let $N$ be a tensor network of tree-factorable tensors such that $|\tnfree{N}| \leq 3$ and $G=\struct{N}$ has a branch decomposition $T$ of width $w \geq 1$. Then we can construct in polynomial time a tensor network $M$ and a contraction tree $S$ for $M$ s.t. $\text{max-rank}(S) \leq \ceil{4w/3}$ and $N$ is a partial contraction of $M$.
\end{theorem}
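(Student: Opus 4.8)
The plan is to mirror the construction behind Theorem~\ref{thm:factorable-tree} rather than to reduce to it as a black box. A naive reduction---convert the branch decomposition $T$ of width $w$ into a tree decomposition of $G=\struct{N}$ and invoke Theorem~\ref{thm:factorable-tree}---is too lossy: a branch decomposition of width $w$ only guarantees a tree decomposition of width up to $\ceil{3w/2}-1$, which would yield a bound of roughly $2w$ rather than $\ceil{4w/3}$. So I would instead rebuild the contraction tree directly from $T$, exploiting that the natural separators of a branch decomposition have size $\le w$, whereas the bags of a tree decomposition of width $w$ have size $w+1$; removing this ``$+1$'' is exactly what turns $\ceil{4(w+1)/3}$ into $\ceil{4w/3}$.

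First I would factor each tensor along $T$. For each tensor $A\in N$ (a vertex of $G$), let $T_A$ be the minimal subtree of $T$ spanning the leaves of $T$ that are the indices $\tdim{A}$ (recall the leaves of a branch decomposition are the edges of $G$, i.e.\ the indices of $N$). Since $T$ is subcubic, so is $T_A$, and its leaves are exactly $\tdim{A}$; because $A$ is tree-factorable (Definition~\ref{def:tree-factorable}), I may replace $A$ by a network $N_A$ of tensors of rank $\le 3$ laid out isomorphically to $T_A$. Let $M=\bigcup_{A\in N}N_A$; by construction $N$ is a partial contraction of $M$ (so $\tntensor{M}=\tntensor{N}$), and $\struct{M}$ is the graph obtained from $G$ by replacing each vertex $A$ with the tree $T_A$. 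All of this is polynomial time.

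Next I would read a contraction tree $S$ for $M$ off of $T$. Root $T$ arbitrarily; for each edge $a$ of $T$, the partial contraction of all tensors of $M$ lying below $a$ has free indices in bijection with the factoring edges crossing $a$, and an index of $M$ crosses $a$ exactly for those original tensors $A$ with $a\in T_A$, i.e.\ for $A\in \eincs{G}{C_a}\cap\eincs{G}{\E{G}\setminus C_a}$. Hence each such frontier has rank at most $width_b(T)=w$ (Definition~\ref{def:branch}); the original indices never cross an internal edge of $T$ since each is confined to a single leaf, and the $\le 3$ free indices are handled through the free vertex $\fv$ as in Theorem~\ref{thm:factorable-tree}. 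The remaining work is to schedule, at each internal (degree-three) node of $T$, the contraction of that node's small tensors together with its three incident frontiers, and this local scheduling is the step that should carry over from the proof of Theorem~\ref{thm:factorable-tree}.

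The main obstacle is precisely this degree-three balancing, which is where the factor $4/3$ originates and where the analysis must be global rather than local: naively combining two rank-$w$ frontiers at a node costs up to $2w$, so one must interleave the formation of the frontiers with the node contraction across the whole tree and argue the intermediate rank never exceeds $\ceil{4w/3}$. I would reuse the balancing argument from Theorem~\ref{thm:factorable-tree} essentially unchanged, and the only genuinely new verification is that feeding it the branch-decomposition separators (of size $\le w$) in place of tree-decomposition bags (of size $\le w+1$) produces the claimed bound $\ceil{4w/3}$ together with a valid contraction tree $S$ for which $N$ is a partial contraction of $M$.
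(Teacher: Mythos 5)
Your proposal is correct and takes essentially the same route as the paper's proof: factor each tensor $A$ along the minimal subtree $T_A$ of the branch decomposition, note that the indices crossing an arc $a$ of $T$ are exactly one per tensor in $\eincs{G}{C_a}\cap\eincs{G}{\E{G}\setminus C_a}$ (hence at most $w$), and split each node's degree-three factors into three balanced groups so that the extra cost is only $\ceil{w/3}$, reusing the remaining machinery of Theorem~\ref{thm:factorable-tree}. The paper formalizes your ``frontier scheduling'' step by building an explicit carving decomposition of $\struct{M}$ of width at most $w+\ceil{w/3}=\ceil{4w/3}$ and then invoking Theorem~3 of \cite{DDV19} to turn it into a contraction tree, which is precisely the balancing argument you propose to carry over.
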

\begin{proof}[Sketch of Construction]
For simplicity, we sketch here only the case when $\tnfree{N} = \emptyset$, as occurs in counting. For each $A \in N$, $\tdim{A} = \vinc{G}{A}$ is a subset of the leaves of $T$ and so the smallest connected component of $T$ containing $\tdim{A}$ is a dimension tree $T_A$ of $A$. Factor $A$ with $T_A$ using Definition \ref{def:tree-factorable} to get $N_A$ and $g_A$.

We now construct the contraction tree for $M = \cup_{A \in N} N_A$. For each $n \in \V{T}$, let $M_n = \{B~:~B \in N_A, g_A(B) = n\}$. At each leaf $\ell \in \Lv{T}$, attach an arbitrary contraction tree of $M_\ell$. At each non-leaf $n \in \V{T}$, partition $M_n$ into three equally-sized sets and attach an arbitrary contraction tree for each to the three edges incident to $n$. These attachments create a carving decomposition $T'$ from $T$ for $\struct{M}$, of width no larger than $\ceil{4w/3}$. Finally, apply Theorem 3 of \cite{DDV19} to construct a contraction tree $S$ for $M$ from $T'$ s.t. $\text{max-rank}(S) \leq \ceil{4w/3}$.
\end{proof}

The full proof is an extension of the proof of Theorem \ref{thm:factorable-tree} given in \cite{DDV19} and appears in Appendix \ref{sec:appendix:proof}.
Theorem \ref{thm:factorable-branch} subsumes Theorem \ref{thm:factorable-tree}, since given a graph $G$ and a tree decomposition for $G$ of width $w+1$ one can construct in polynomial time a branch decomposition for $G$ of width $w$ \cite{RS91}. Moreover, on many graphs there are branch decompositions whose width is smaller than all tree decompositions. We explore this potential in practice in Section \ref{sec:experiments}.

It was previously known that branch decompositions can also be used in variable elimination \cite{BDP09}, but Theorem \ref{thm:factorable-branch} considers branch decompositions of a different graph. For example, consider again computing the model count of $\psi = \left(\lor_{i=1}^n x_i\right) \land \left(\lor_{i=1}^n \neg x_i\right)$. Theorem \ref{thm:factorable-branch} uses branch decompositions of the incidence graph of $\psi$, which has branchwidth 2. Variable elimination uses branch decompositions of the primal graph of $\psi$, which has branchwidth $n-1$. 

\subsection{Parallelizing the Execution Stage with TensorFlow and Slicing}
\label{sec:parallelizing:execution}
As discussed in Section \ref{sec:algorithm:execution}, each contraction in Algorithm \ref{alg:network-contraction} can be implemented as a matrix multiplication. This was done in \cite{DDV19} using \tool{numpy} \cite{numpy}, and it is straightforward to adjust the implementation to leverage multiple cores with \tool{numpy} and a GPU with \tool{TensorFlow} \cite{ABCCDDDGII16}.

The primary challenge that emerges when using a GPU is dealing with the small onboard memory. For example, the \tool{NVIDIA Tesla-V100} (which we use in Section \ref{sec:experiments}) has just 16GB of onboard memory. This limits the size of tensors that can be easily manipulated. A single contraction of two tensors can be performed across multiple GPU kernel calls \cite{RRBSKH08}, and similar techniques were implemented in \tool{gpusat2} \cite{FHZ19}. These techniques, however, require parts of the input tensors to be copied into and out of GPU memory, which incurs significant slowdown.

Instead, we use \emph{index slicing} \cite{CZHNS18,GK20,VBNHRBM19} of a tensor network $N$. This technique is analogous to \emph{conditioning} on Bayesian networks \cite{darwiche01,dechter99,pearl86,SAS94}. The idea is to represent $\tntensor{N}$ as the sum of contractions of smaller tensor networks. Each smaller network contains tensor slices:

\begin{definition}
Let $A$ be a tensor, $I \subseteq \textbf{Ind}$, and $\eta \in \domain{I}$. Then the \emph{$\eta$-slice} of $A$ is the tensor $A[\eta]$ with indices $\tdim{A} \setminus I$ defined for all $\tau \in \domain{\tdim{A} \setminus I}$ by $A[\eta](\tau) \equiv A((\tau \cup \eta)\restrict{\tdim{A}}).$
\end{definition}

These are called slices because every value in $A$ appears in exactly one tensor in $\{ A[\eta] : \eta \in \domain{I} \}$. We now define and prove the correctness of index slicing: 
\begin{theorem}
Let $N$ be a tensor network and let $I \subseteq \tnbound{N}$. For each $\eta \in \domain{I}$, let $N[\eta] = \{ A[\eta] : A \in N\}$. Then $\tntensor{N} = \sum_{\eta \in \domain{I}} \tntensor{N[\eta]}.$
\end{theorem}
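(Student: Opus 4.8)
The plan is to compute both sides directly from the definition of contraction (Equation \ref{eqn:contraction}) and match them term by term. First I would confirm that the claimed identity is well-typed. Since $I \subseteq \tnbound{N}$, every index in $I$ appears exactly twice in $N$, and slicing by $\eta$ deletes each such index from every tensor that carries it; slicing removes indices and never adds them, so no index appears more than twice in $N[\eta]$ and $N[\eta]$ is a genuine (nonempty) tensor network. Moreover, deleting the indices of $I$ leaves the free indices of $N$ untouched (they are not in $I$) and leaves every bond index outside $I$ still appearing twice, so $\tnfree{N[\eta]} = \tnfree{N}$ and $\tnbound{N[\eta]} = \tnbound{N} \setminus I$. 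Hence each $\tntensor{N[\eta]}$, like $\tntensor{N}$, is a tensor over $\tnfree{N}$, and the equation asserts equality of two tensors over the same index set.

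Next, fixing $\tau \in \domain{\tnfree{N}}$, I would expand $\tntensor{N[\eta]}(\tau)$ by the contraction formula, summing an assignment $\rho'$ over $\domain{\tnbound{N} \setminus I}$, and then unfold the slice using its definition. The one identity that does the real work is $A[\eta]\big((\rho' \cup \tau)\restrict{\tdim{A} \setminus I}\big) = A\big((\rho' \cup \eta \cup \tau)\restrict{\tdim{A}}\big)$: restoring $\eta$ on the indices of $\tdim{A} \cap I$ exactly rebuilds the original argument of $A$ on all of $\tdim{A}$. Substituting this into the product (kept indexed by $A \in N$) gives $\tntensor{N[\eta]}(\tau) = \sum_{\rho' \in \domain{\tnbound{N} \setminus I}} \prod_{A \in N} A\big((\rho' \cup \eta \cup \tau)\restrict{\tdim{A}}\big)$.

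The final and crucial step is the reindexing. Summing the previous display over $\eta \in \domain{I}$ produces a double sum over $\domain{I} \times \domain{\tnbound{N} \setminus I}$. Because $I$ and $\tnbound{N} \setminus I$ partition $\tnbound{N}$, the natural bijection $\domain{\tnbound{N}} \cong \domain{I} \times \domain{\tnbound{N} \setminus I}$ sends $\rho \mapsto (\rho\restrict{I}, \rho\restrict{\tnbound{N}\setminus I})$, so setting $\rho = \eta \cup \rho'$ collapses the double sum into a single sum over $\rho \in \domain{\tnbound{N}}$. This yields exactly $\sum_{\rho \in \domain{\tnbound{N}}} \prod_{A \in N} A\big((\rho \cup \tau)\restrict{\tdim{A}}\big) = \tntensor{N}(\tau)$, which is the desired equality.

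I expect the only real obstacle to be bookkeeping rather than mathematical depth: one must track the restriction operators carefully and, in particular, read $N[\eta]$ as the family $(A[\eta])_{A \in N}$ so that the contraction product runs over all of $N$ (and does not silently merge two tensors whose slices happen to coincide). Keeping the product indexed by $A \in N$ throughout sidesteps this subtlety and makes the term-by-term matching with $\tntensor{N}(\tau)$ transparent.
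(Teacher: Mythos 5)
Your proof is correct and follows essentially the same route as the paper's (much terser) argument: split the sum over $\domain{\tnbound{N}}$ into sums over $\domain{I}$ and $\domain{\tnbound{N}\setminus I}$, apply the slice definition to rewrite $A[\eta]$ in terms of $A$, and recombine. Your additional care about treating $N[\eta]$ as the family $(A[\eta])_{A\in N}$ rather than a set is a reasonable bookkeeping precaution that the paper glosses over, but it does not change the substance of the argument.
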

\begin{proof}
Move the summation over $\eta \in \domain{I}$ to the outside of Equation \ref{eqn:contraction}, then apply the definition of tensor slices and recombine terms.
\end{proof}

\begin{algorithm}[t]
	\caption{Sliced contraction of a tensor network}\label{alg:tn-sliced}
	\textbf{Input:} A tensor network $N$, a contraction tree $T$ for $N$, and a memory bound $m$. \\
	\textbf{Output:} $\tntensor{N}$, the contraction of $N$, performed using at most $m$ memory.
	\begin{algorithmic}[1]
	    \State $I \gets \emptyset$
	    \While{$\Call{MemCost}{N, T, I} > m$}
	    \State $I \gets I \cup \{\Call{ChooseSliceIndex}{N, T, I}\}$
	    \EndWhile
	    \State \Return $\sum_{\eta \in [E]} \Call{Execute}{N[\eta], T[\eta]}$
	\end{algorithmic}
\end{algorithm}

By choosing $I$ carefully, computing each $\tntensor{N[\eta]}$ uses less intermediate memory (compared to computing $\tntensor{N}$) while using the same contraction tree. In exchange, the number of floating point operations to compute all $\tntensor{N[\eta]}$ terms increases. 

Choosing $I$ is itself a difficult problem. Our goal is to choose the smallest $I$ so that contracting each network slice $N[\eta]$ can be done in onboard memory. We first consider adapting Bayesian network conditioning heuristics to the context of tensor networks. Two popular conditioning heuristics are (1) \emph{cutset conditioning} \cite{pearl86}, which chooses $I$ so that each network slice is a tree, and (2) \emph{recursive conditioning} \cite{darwiche01}, which chooses $I$ so that each network slice is disconnected\footnote{The full recursive conditioning procedure then recurses on each connected component. While recursive conditioning is an any-space algorithm, the partial caching required for this is difficult to implement on a GPU.}. Both of these heuristics result in a choice of $I$ far larger than our goal requires. 

Instead, in this work as a first step we use a heuristic from \cite{CZHNS18,GK20}: choose $I$ incrementally, greedily minimizing the memory cost of contracting $N[\eta]$ until the memory usage fits in onboard memory. Unlike cutset and recursive conditioning, the resulting networks $N[\eta]$ are typically still highly connected. One direction for future work is to compare other heuristics for choosing $I$ (e.g., see the discussion in Section 10 of \cite{dechter99}).

This gives us Algorithm \ref{alg:tn-sliced}, which performs the execution stage with limited memory at the cost of additional time. $T[\eta]$ is the contraction tree obtained by computing the $\eta$-slice of every tensor in $T$. $\Call{MemCost}{N, T, I}$ computes the memory for one $\Call{Execute}{N[\eta], T[\eta]}$ call. $\Call{ChooseSliceIndex}{N,T,I}$ chooses the next slice index greedily to minimize memory cost.


\section{Implementation and Evaluation}
\label{sec:experiments}

We aim to answer the following research questions:

\begin{enumerate}\itemsep0em 
    \item[(RQ1)] Is the planning stage improved by a parallel portfolio of decomposition tools?
    
    \item[(RQ2)] Is the planning stage improved by adding a branch-decomposition tool?
    
    \item[(RQ3)] When should Algorithm \ref{alg:wmc-tn} transition from the planning stage to the execution stage (i.e., what should be the value of the performance factor $\alpha$)?
    
    \item[(RQ4)] Is the execution stage improved by leveraging multiple cores and a GPU?
    
    \item[(RQ5)] Do parallel tensor network approaches improve a portfolio of state-of-the-art weighted model counters (\tool{cachet}, \tool{miniC2D}, \tool{d4}, \tool{ADDMC}, and \tool{gpuSAT2})?
\end{enumerate}

We implement our changes on top of \tool{TensorOrder} \cite{DDV19} (which implements Algorithm \ref{alg:wmc-tn}) to produce \tool{TensorOrder2}, a new parallel weighted model counter. Implementation details are described in Section \ref{sec:experiments:impl}. All code is available at  \url{https://github.com/vardigroup/TensorOrder}.

We use a standard set\footnote{\url{https://github.com/vardigroup/ADDMC/releases/tag/v1.0.0}} of 1914 weighted model counting benchmarks \cite{DPV20}. Of these, 1091 benchmarks\footnote{\url{https://www.cs.rochester.edu/u/kautz/Cachet/}} are from Bayesian inference problems \cite{SBK05} and 823 benchmarks\footnote{\url{http://www.cril.univ-artois.fr/KC/benchmarks.html}} are unweighted benchmarks from various domains that were artificially weighted by \cite{DPV20}. For weighted model counters that cannot handle real weights larger than 1 (\tool{cachet} and \tool{gpusat2}), we rescale the weights of benchmarks with larger weights. In Experiment 3, we also consider preprocessing these 1914 benchmarks by applying $\pkg{pmc-eq}$ \cite{LM14} (which preserves weighted model count). 
We evaluate the performance of each tool using the PAR-2 score, which is the sum of of the wall-clock times for each completed benchmark, plus twice the timeout for each uncompleted benchmark.

All counters are run in the Docker images (one for each counter) with Docker 19.03.5. All experiments are run on Google Cloud \tool{n1-standard-8} machines with 8 cores (Intel Haswell, 2.3 GHz) and 30 GB RAM. GPU-based counters are provided an \tool{NVIDIA Tesla V100} GPU (16 GB of onboard RAM) using NVIDIA driver 418.67 and CUDA 10.1.243.

\subsection{Implementation Details of \tool{TensorOrder2}}
\label{sec:experiments:impl}
\tool{TensorOrder2} is primarily implemented in Python 3 as a modified version of \tool{TensorOrder}. We replace portions of the Python code with C++ (\tool{g++} v7.4.0) using Cython 0.29.15 for general speedup, especially in $\Call{FactorTree}{\cdot}$.

\paragraph{Planning.} 
\tool{TensorOrder} contains an implementation of the planning stage using a choice of three single-core tree-decomposition solvers: \pkg{Tamaki} \cite{Tamaki17}, \pkg{FlowCutter} \cite{HS18}, and \pkg{htd} \cite{AMW17}. We add to \tool{TensorOrder2} an implementation of Theorem \ref{thm:factorable-branch} and use it to add a branch-decomposition solver \pkg{Hicks} \cite{hicks02}.
We implement a parallel portfolio of graph-decomposition solvers in C++ and give \tool{TensorOrder2} access to two portfolios, each with access to all cores: \pkg{P3} (which combines \pkg{Tamaki},  \pkg{FlowCutter}, and \pkg{htd}) and \pkg{P4} (which includes \pkg{Hicks} as well).

\paragraph{Execution.} 
\tool{TensorOrder} is able to perform the execution stage on a single core and on multiple cores using \pkg{numpy} v1.18.1 and \pkg{OpenBLAS} v0.2.20. We add to \tool{TensorOrder2} the ability to contract tensors on a GPU with \pkg{TensorFlow} v2.1.0 \cite{ABCCDDDGII16}. To avoid GPU kernel calls for small contractions, \tool{TensorOrder2} uses a GPU only for contractions where one of the tensors involved has rank $\geq 20$, and reverts back to using multi-core \pkg{numpy} otherwise. We also add an implementation of Algorithm \ref{alg:tn-sliced}.
Overall, \tool{TensorOrder2} runs the execution stage on three hardware configurations: \pkg{CPU1} (restricted to a single CPU core), \pkg{CPU8} (allowed to use all 8 CPU cores), and \pkg{GPU} (allowed to use all 8 CPU cores and use a GPU).

\begin{figure}
	\centering
	\input{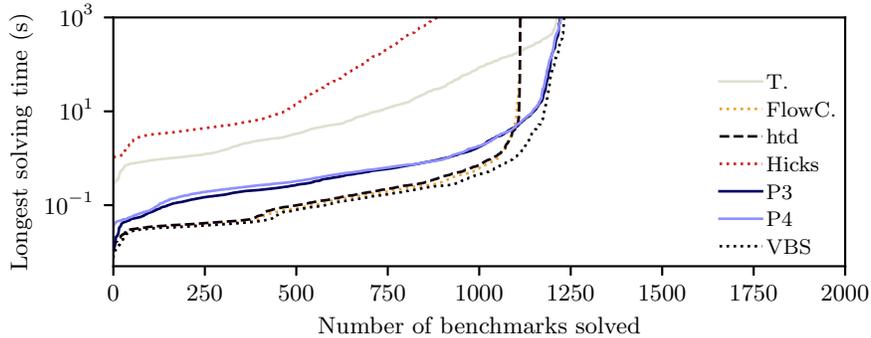}
    \vspace*{-0.7cm}
	\caption{\label{fig:planning} A cactus plot of the performance of various planners. A planner ``solves'' a benchmark when it finds a contraction tree of max rank 30 or smaller.}
\end{figure}

\subsection{Experiment 1: The Planning Stage (RQ1 and RQ2)}
We run each planning implementation (\pkg{FlowCutter}, \pkg{htd}, \pkg{Tamaki}, \pkg{Hicks}, \pkg{P3}, and \pkg{P4}) once on each of our 1914 benchmarks and save all contraction trees found within 1000 seconds (without executing the contractions). Results are summarized in Figure \ref{fig:planning}. 



We observe that the parallel portfolio planners outperform all four single-core planners after 5 seconds. In fact, after 20 seconds both portfolios perform almost as well as the virtual best solver. We conclude that portfolio solvers significantly speed up the planning stage.

We also observe that \pkg{P3} and \pkg{P4} perform almost identically in Figure \ref{fig:planning}. Although after 1000 seconds \pkg{P4} has found better contraction trees than \pkg{P3} on 407 benchmarks, most improvements are small (reducing the max-rank by 1 or 2) or still do not result in good-enough contraction trees. We conclude that adding \pkg{Hicks} improves the portfolio slightly, but not significantly.
 
\subsection{Experiment 2: Determining the Performance Factor (RQ3)}
\label{sec:experiments:pf}
We take each contraction tree discovered in Experiment 1 (with max-rank below 36) and use \tool{TensorOrder2} to execute the tree with a timeout of 1000 seconds on each of three hardware configurations (\pkg{CPU1}, \pkg{CPU8}, and \pkg{GPU}). We observe that the max-rank of almost all solved contraction trees is 30 or smaller.

Given a performance factor, a benchmark, and a planner, we use the planning times from Experiment 1 to determine which contraction tree would have been chosen in step 4 of Algorithm \ref{alg:wmc-tn}. We then add the execution time of the relevant contraction tree on each hardware. In this way, we simulate Algorithm \ref{alg:wmc-tn} for a given planner and hardware with many performance factors. 

\begin{table}[b]
  \caption{\label{tab:performance_factor} The performance factor for each combination of planner and hardware that minimizes the simulated PAR-2 score.}
  \centering
    \begin{tabular}{|l|c|c|c|c|c|c|} \hline
 & \pkg{Tamaki} & \pkg{FlowCutter} & \pkg{htd} & \pkg{Hicks} & \pkg{P3} & \pkg{P4}\\ \hline 
\pkg{CPU1} & $3.8\cdot 10^{-11}$ & $4.8\cdot 10^{-12}$ & $1.6\cdot 10^{-12}$ & $1.0\cdot 10^{-21}$ & $1.4\cdot 10^{-11}$ & $1.6\cdot 10^{-11}$\\ \hline 
\pkg{CPU8} & $7.8\cdot 10^{-12}$ & $1.8\cdot 10^{-12}$ & $1.3\cdot 10^{-12}$ & $1.0\cdot 10^{-21}$ & $5.5\cdot 10^{-12}$ & $6.2\cdot 10^{-12}$\\ \hline 
\pkg{GPU} & $2.1\cdot 10^{-12}$ & $5.5\cdot 10^{-13}$ & $1.3\cdot 10^{-12}$ & $1.0\cdot 10^{-21}$ & $3.0 \cdot 10^{-12}$ & $3.8\cdot 10^{-12}$\\ \hline 
    \end{tabular}
\end{table}

For each planner and hardware, Table 2 shows the performance factor $\alpha$ that minimizes the simulated PAR-2 score. We observe that the performance factor for \pkg{CPU8} is lower than for \pkg{CPU1}, but not necessarily higher or lower than for \pkg{GPU}. We conclude that different combinations of planners and hardware are optimized by different performance factors. 


\subsection{Experiment 3: End-to-End Performance (RQ4 and RQ5)}
Finally, we compare \tool{TensorOrder2} with state-of-the-art weighted model counters \tool{cachet}, \tool{miniC2D}, \tool{d4}, \tool{ADDMC}, and \tool{gpuSAT2}. We consider \tool{TensorOrder2} using \pkg{P4} combined with each hardware configuration (\pkg{CPU1}, \pkg{CPU8}, and \pkg{GPU}), along with \pkg{Tamaki} + \pkg{CPU1} as the best non-parallel configuration from \cite{DDV19}. Note that \tool{P4}+\tool{CPU1} still leverages multiple cores in the planning stage. The performance factor from Experiment 2 is used for each \tool{TensorOrder2} configuration.

We run each counter once on each benchmark (both with and without \pkg{pmc-eq} preprocessing) with a timeout of 1000 seconds and record the wall-clock time taken. When preprocessing is used, both the timeout and the recorded time include preprocessing time. For \tool{TensorOrder2}, recorded times include all of Algorithm \ref{alg:wmc-tn}. Results are summarized in Figure \ref{fig:comparison} and Table \ref{tab:comparison}. 

\begin{figure}[t]
\begin{center}
\input{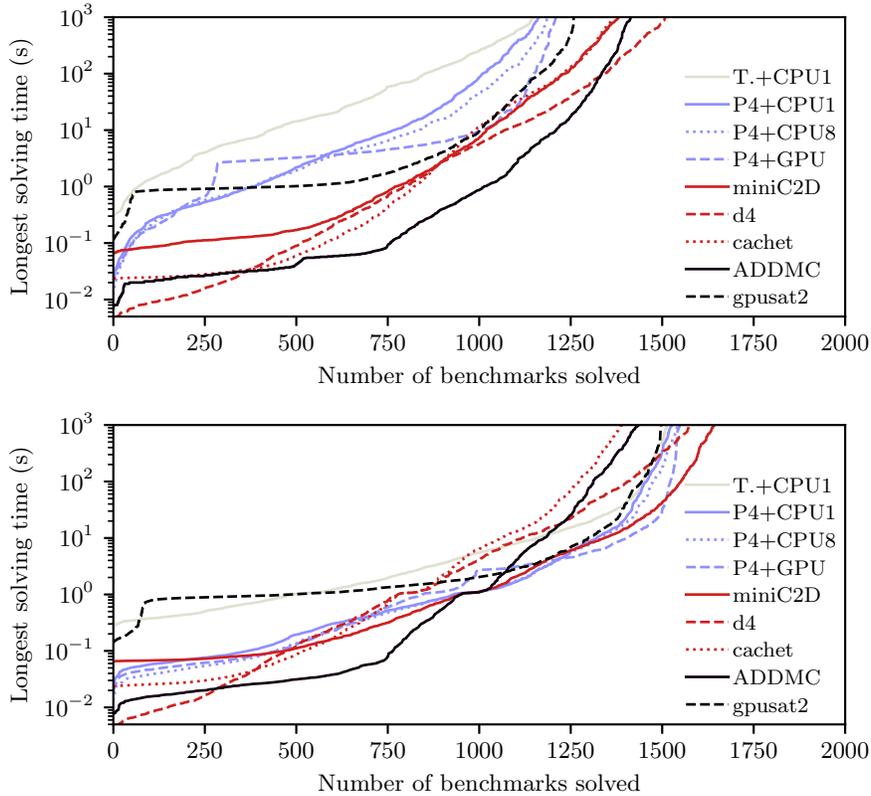}
\vspace*{-0.5cm}
\caption{\label{fig:comparison} A cactus plot of the number of benchmarks solved by various counters, without (above) and with (below) the \tool{pmc-eq} \cite{LM14} preprocessor.}
\end{center}
\vspace*{-0.8cm}
\end{figure}

We observe that the performance of \tool{TensorOrder2} is improved by the portfolio planner and, on hard benchmarks, by executing on a multi-core CPU and on a GPU. The flat line at 3 seconds for $\pkg{P4}+\pkg{GPU}$ is caused by overhead from initializing the GPU.


Comparing \tool{TensorOrder2} with the other counters, \tool{TensorOrder2} is competitive without preprocessing but solves fewer benchmarks than all other counters, although \tool{TensorOrder2} (with some configuration) is faster than all other counters on 158 benchmarks before preprocessing. 
We observe that preprocessing significantly boosts \tool{TensorOrder2} relative to other counters, similar to prior observations with \tool{gpusat2} \cite{FHZ19}. \tool{TensorOrder2} solves the third-most preprocessed benchmarks of any solver and has the second-lowest PAR-2 score (notably, outperforming \tool{gpusat2} in both measures). \tool{TensorOrder2} (with some configuration) is faster than all other counters on 200 benchmarks with preprocessing. Since \tool{TensorOrder2} improves the virtual best solver on 158 benchmarks without preprocessing and on 200 benchmarks with preprocessing, we conclude that \tool{TensorOrder2} is useful as part of a portfolio of counters.

\begin{table}[t]
  \caption{\label{tab:comparison} The numbers of benchmarks solved by each counter fastest and in total after 1000 seconds, and the PAR-2 score.}
  \centering
  \begin{tabular}{l||r|r|r||r|r|r|}
  & \multicolumn{3}{c||}{Without preprocessing} & \multicolumn{3}{c|}{With \tool{pmc-eq} preprocessing} \\
 & \# Fastest & \# Solved & PAR-2 Score & \# Fastest & \# Solved & PAR-2 Score\\ \hline 
\pkg{T.}+\pkg{CPU1} & 0 & 1151 & 1640803. & 0 & 1514 & 834301.\\ 
\pkg{P4}+\pkg{CPU1} & 45 & 1164 & 1562474. & 83 & 1526 & 805547.\\ 
\pkg{P4}+\pkg{CPU8} & 50 & 1185 & 1500968. & 67 & 1542 & 771821.\\ 
\pkg{P4}+\pkg{GPU} & 63 & 1210 & 1436949. & 50 & 1549 & 745659.\\ \hline 
\tool{miniC2D} & 50 & 1381 & 1131457. & 221 & 1643 & 585908.\\ 
\tool{d4} & 615 & 1508 & 883829. & 550 & 1575 & 747318.\\ 
\tool{cachet} & 264 & 1363 & 1156309. & 221 & 1391 & 1099003.\\ 
\tool{ADDMC} & 640 & 1415 & 1032903. & 491 & 1436 & 1008326.\\  
\tool{gpusat2} & 37 & 1258 & 1342646. & 25 & 1497 & 854828.\\ \hline 
\end{tabular}
\end{table}
\section{Discussion}
In this work, we explored the impact of multiple-core and GPU use on tensor network contraction for weighted model counting. We implemented our techniques in \tool{TensorOrder2}, a new parallel counter, and showed that \tool{TensorOrder2} is useful as part of a portfolio of counters.

In the planning stage, we showed that a parallel portfolio of graph-decomposition solvers is significantly faster than single-core approaches. We proved that branch decomposition solvers can also be included in this portfolio, but concluded that a state-of-the-art branch decomposition solver only slightly improves the portfolio. For future work, it would be interesting to consider leveraging other width parameters (e.g. \cite{AGG07} or \cite{GS17}) in the portfolio as well. It may also be possible to improve the portfolio through more advanced algorithm-selection techniques \cite{HHLKS09,XHHL12}. One could develop parallel heuristic decomposition solvers directly, e.g., by adapting the exact GPU-based tree-decomposition solver \cite{VB17} into a heuristic solver.

In the execution stage, we showed that tensor contractions can be performed with \pkg{TensorFlow} on a GPU. When combined with index slicing, we concluded that a GPU speeds up the execution stage for many hard benchmarks. For easier benchmarks, the overhead of a GPU may outweigh any contraction speedups. We focused here on parallelism within a single tensor contraction, but there are opportunities in future work to exploit higher-level parallelism, e.g. by running each slice computation in Algorithm \ref{alg:tn-sliced} on a separate GPU. 

\pkg{TensorFlow} also supports performing tensor contractions on TPUs (tensor processing unit \cite{JYPPABBBBB17}), which are specialized hardware designed for neural network training and inference. Tensor networks therefore provide a natural framework to leverage TPUs for weighted model counting as well. There are additional challenges in the TPU setting: floating-point precision is limited, and there is a (100+ second) compilation stage from a contraction tree into XLA \cite{XLA}. We plan to explore these challenges further in future work.

\section*{Acknowledgments}
We thank Illya Hicks for providing us the code of his branchwidth heuristics. This work was supported in part by the NSF (grants IIS-1527668, CCF-1704883, IIS-1830549, and DMS-1547433), by the DoD (MURI grant N00014-20-1-2787), and by Google Cloud.

\bibliographystyle{plain}
\bibliography{tensornetworks}

\begin{thebibliography}{10}

\bibitem{ABCCDDDGII16}
Mart{\'\i}n Abadi, Paul Barham, Jianmin Chen, Zhifeng Chen, Andy Davis, Jeffrey
  Dean, Matthieu Devin, Sanjay Ghemawat, Geoffrey Irving, Michael Isard, et~al.
\newblock Tensorflow: A system for large-scale machine learning.
\newblock In {\em Proc. of OSDI}, pages 265--283, 2016.

\bibitem{KNR16}
Mahmoud Abo~Khamis, Hung~Q Ngo, and Atri Rudra.
\newblock {FAQ}: questions asked frequently.
\newblock In {\em Proc. of PODS}, pages 13--28. ACM, 2016.

\bibitem{AMW17}
Michael Abseher, Nysret Musliu, and Stefan Woltran.
\newblock htd- a free, open-source framework for (customized) tree
  decompositions and beyond.
\newblock In {\em Proc. of CPAIOR}, pages 376--386, 2017.

\bibitem{AGG07}
Isolde Adler, Georg Gottlob, and Martin Grohe.
\newblock Hypertree width and related hypergraph invariants.
\newblock {\em European Journal of Combinatorics}, 28(8):2167--2181, 2007.

\bibitem{Bacchus2003}
Fahiem Bacchus, Shannon Dalmao, and Toniann Pitassi.
\newblock Algorithms and complexity results for \#{SAT} and {Bayesian}
  inference.
\newblock In {\em Proc. of FOCS}, pages 340--351, 2003.

\bibitem{BDP09}
Fahiem Bacchus, Shannon Dalmao, and Toniann Pitassi.
\newblock Solving \#{SAT} and bayesian inference with backtracking search.
\newblock {\em Journal of Artificial Intelligence Research}, 34:391--442, 2009.

\bibitem{BK07}
Brett~W Bader and Tamara~G Kolda.
\newblock Efficient {MATLAB} computations with sparse and factored tensors.
\newblock {\em SIAM Journal on Scientific Computing}, 30(1):205--231, 2007.

\bibitem{BSS15}
Tom{\'a}{\v{s}} Balyo, Peter Sanders, and Carsten Sinz.
\newblock {HordeSat}: A massively parallel portfolio {SAT} solver.
\newblock In {\em Proc. of SAT}, pages 156--172. Springer, 2015.

\bibitem{BB17}
Jacob Biamonte and Ville Bergholm.
\newblock Tensor networks in a nutshell.
\newblock {\em arXiv preprint arXiv:1708.00006}, 2017.

\bibitem{BMT15}
Jacob~D Biamonte, Jason Morton, and Jacob Turner.
\newblock Tensor network contractions for \#{SAT}.
\newblock {\em Journal of Statistical Physics}, 160(5):1389--1404, 2015.

\bibitem{BSB15}
Jan Burchard, Tobias Schubert, and Bernd Becker.
\newblock Laissez-faire caching for parallel\# sat solving.
\newblock In {\em Proc. of SAT}, pages 46--61. Springer, 2015.

\bibitem{BTB15}
Jan Burchard, Tobias Schubert, and Bernd Becker.
\newblock Laissez-faire caching for parallel \#{SAT} solving.
\newblock In {\em Proc. of SAT}, pages 46--61. Springer, 2015.

\bibitem{CW16}
G{\"u}nther Charwat and Stefan Woltran.
\newblock Dynamic programming-based {QBF} solving.
\newblock In {\em Proc. of SAT}, pages 27--40, 2016.

\bibitem{CZHNS18}
Jianxin Chen, Fang Zhang, Cupjin Huang, Michael Newman, and Yaoyun Shi.
\newblock Classical simulation of intermediate-size quantum circuits.
\newblock {\em arXiv preprint arXiv:1805.01450}, 2018.

\bibitem{Cichocki14}
Andrzej Cichocki.
\newblock Era of big data processing: A new approach via tensor networks and
  tensor decompositions.
\newblock {\em arXiv preprint arXiv:1403.2048}, 2014.

\bibitem{darwiche01}
Adnan Darwiche.
\newblock Recursive conditioning.
\newblock {\em Artificial Intelligence}, 126(1-2):5--41, 2001.

\bibitem{oliveira18}
Mateus de~Oliveira~Oliveira.
\newblock Size-treewidth tradeoffs for circuits computing the element
  distinctness function.
\newblock {\em Theory of Computing Systems}, 62(1):136--161, 2018.

\bibitem{dechter99}
Rina Dechter.
\newblock Bucket elimination: A unifying framework for reasoning.
\newblock {\em Artificial Intelligence}, 113(1-2):41--85, 1999.

\bibitem{DH07}
Carmel Domshlak and J{\"o}rg Hoffmann.
\newblock Probabilistic planning via heuristic forward search and weighted
  model counting.
\newblock {\em Journal of Artificial Intelligence Research}, 30(1):565--620,
  2007.

\bibitem{DDV19}
Jeffrey~M Dudek, Leonardo Due{\~n}as-Osorio, and Moshe~Y Vardi.
\newblock Efficient contraction of large tensor networks for weighted model
  counting through graph decompositions.
\newblock {\em arXiv preprint arXiv:1908.04381}, 2019.

\bibitem{DPV20}
Jeffrey~M Dudek, Vu~H~N Phan, and Moshe~Y Vardi.
\newblock {ADDMC}: Weighted model counting with algebraic decision diagrams.
\newblock In {\em Proc. of AAAI}, 2020.

\bibitem{DFGHSW18}
Eugene~F Dumitrescu, Allison~L Fisher, Timothy~D Goodrich, Travis~S Humble,
  Blair~D Sullivan, and Andrew~L Wright.
\newblock Benchmarking treewidth as a practical component of tensor network
  simulations.
\newblock {\em PloS one}, 13(12), 2018.

\bibitem{EP14}
Glen Evenbly and Robert~NC Pfeifer.
\newblock Improving the efficiency of variational tensor network algorithms.
\newblock {\em Physical Review B}, 89(24):245118, 2014.

\bibitem{FSH04}
Kayvon Fatahalian, Jeremy Sugerman, and Pat Hanrahan.
\newblock Understanding the efficiency of {GPU} algorithms for matrix-matrix
  multiplication.
\newblock In {\em Proc. of EUROGRAPHICS}, pages 133--137. ACM, 2004.

\bibitem{FHMW17}
Johannes~K Fichte, Markus Hecher, Michael Morak, and Stefan Woltran.
\newblock Answer set solving with bounded treewidth revisited.
\newblock In {\em Proc. of LPNMR}, pages 132--145. Springer, 2017.

\bibitem{FHWZ18}
Johannes~K Fichte, Markus Hecher, Stefan Woltran, and Markus Zisser.
\newblock Weighted model counting on the {GPU} by exploiting small treewidth.
\newblock In {\em Proc. of ESA}, pages 28:1--28:16, 2018.

\bibitem{FHZ19}
Johannes~K Fichte, Markus Hecher, and Markus Zisser.
\newblock gpusat2--an improved {GPU} model counter.
\newblock In {\em Proc. of CP}, 2019.

\bibitem{GS17}
Robert Ganian and Stefan Szeider.
\newblock New width parameters for model counting.
\newblock In {\em Proc. of SAT}, pages 38--52. Springer, 2017.

\bibitem{GSS08}
Carla~P Gomes, Ashish Sabharwal, and Bart Selman.
\newblock Model counting.
\newblock In {\em Handbook of Satisfiability}, pages 633--654. IOS Press, 2009.

\bibitem{GK20}
Johnnie Gray and Stefanos Kourtis.
\newblock Hyper-optimized tensor network contraction.
\newblock {\em arXiv preprint arXiv:2002.01935}, 2020.

\bibitem{HS18}
Michael Hamann and Ben Strasser.
\newblock Graph bisection with pareto optimization.
\newblock {\em Journal of Experimental Algorithmics (JEA)}, 23(1):1--2, 2018.

\bibitem{hicks02}
Illya~V Hicks.
\newblock Branchwidth heuristics.
\newblock {\em Congressus Numerantium}, pages 31--50, 2002.

\bibitem{Hirata03}
So~Hirata.
\newblock Tensor contraction engine: Abstraction and automated parallel
  implementation of configuration-interaction, coupled-cluster, and many-body
  perturbation theories.
\newblock {\em The Journal of Physical Chemistry A}, 107(46):9887--9897, 2003.

\bibitem{HHLKS09}
Frank Hutter, Holger~H Hoos, Kevin Leyton-Brown, and Thomas St{\"u}tzle.
\newblock Paramils: an automatic algorithm configuration framework.
\newblock {\em Journal of Artificial Intelligence Research}, 36:267--306, 2009.

\bibitem{JYPPABBBBB17}
Norman~P Jouppi, Cliff Young, Nishant Patil, David Patterson, Gaurav Agrawal,
  Raminder Bajwa, Sarah Bates, Suresh Bhatia, Nan Boden, Al~Borchers, et~al.
\newblock In-datacenter performance analysis of a tensor processing unit.
\newblock In {\em Proc. of ISCA}, pages 1--12, 2017.

\bibitem{KDLD05}
Kalev Kask, Rina Dechter, Javier Larrosa, and Avi Dechter.
\newblock Unifying tree decompositions for reasoning in graphical models.
\newblock {\em Artificial Intelligence}, 166(1-2):165--193, 2005.

\bibitem{KSTKPPRS19}
Jinsung Kim, Aravind Sukumaran-Rajam, Vineeth Thumma, Sriram Krishnamoorthy,
  Ajay Panyala, Louis-No{\"e}l Pouchet, Atanas Rountev, and P~Sadayappan.
\newblock A code generator for high-performance tensor contractions on {GPUs}.
\newblock In {\em Proc. of CGO}, pages 85--95. IEEE Press, 2019.

\bibitem{KKCLA17}
Fredrik Kjolstad, Shoaib Kamil, Stephen Chou, David Lugato, and Saman
  Amarasinghe.
\newblock The tensor algebra compiler.
\newblock {\em Proc. of PACMPL}, page~77, 2017.

\bibitem{KV00}
Phokion~G Kolaitis and Moshe~Y Vardi.
\newblock Conjunctive-query containment and constraint satisfaction.
\newblock {\em Journal of Computer and System Sciences}, 61(2):302--332, 2000.

\bibitem{KCMR18}
Stefanos Kourtis, Claudio Chamon, Eduardo~R Mucciolo, and Andrei~E Ruckenstein.
\newblock Fast counting with tensor networks.
\newblock {\em arXiv preprint arXiv:1805.00475}, 2018.

\bibitem{KFL01}
Frank~R Kschischang, Brendan~J Frey, and Hans-Andrea Loeliger.
\newblock Factor graphs and the sum-product algorithm.
\newblock {\em IEEE Transactions on information theory}, 47(2):498--519, 2001.

\bibitem{Lagergren90}
Jens Lagergren.
\newblock Efficient parallel algorithms for tree-decomposition and related
  problems.
\newblock In {\em Proc. of FOCS}, pages 173--182. IEEE, 1990.

\bibitem{LM14}
Jean-Marie Lagniez and Pierre Marquis.
\newblock Preprocessing for propositional model counting.
\newblock In {\em Proc. of AAAI}, pages 2688--2694, 2014.

\bibitem{LM17}
Jean-Marie Lagniez and Pierre Marquis.
\newblock An improved decision-{DNNF} compiler.
\newblock In {\em Proc. of IJCAI}, pages 667--673, 2017.

\bibitem{LHKK77}
Chuck~L Lawson, Richard~J Hanson, David~R Kincaid, and Fred~T Krogh.
\newblock Basic linear algebra subprograms for fortran usage.
\newblock {\em ACM Transactions on Mathematical Software}, 1977.

\bibitem{MSSS13}
Yuri Malitsky, Ashish Sabharwal, Horst Samulowitz, and Meinolf Sellmann.
\newblock Parallel lingeling, {CCASat}, and {CSCH}-based portfolios.
\newblock {\em Proc. of SAT Competition}, pages 26--27, 2013.

\bibitem{MS08}
Igor~L Markov and Yaoyun Shi.
\newblock Simulating quantum computation by contracting tensor networks.
\newblock {\em SIAM Journal on Computing}, 38(3):963--981, 2008.

\bibitem{MS11}
Igor~L Markov and Yaoyun Shi.
\newblock Constant-degree graph expansions that preserve treewidth.
\newblock {\em Algorithmica}, 59(4):461--470, 2011.

\bibitem{NRBHHJN15}
Thomas Nelson, Axel Rivera, Prasanna Balaprakash, Mary Hall, Paul~D Hovland,
  Elizabeth Jessup, and Boyana Norris.
\newblock Generating efficient tensor contractions for {GPUs}.
\newblock In {\em Proc. of ICPP}, pages 969--978. IEEE, 2015.

\bibitem{numpy}
{NumPy}.
\newblock \url{http://www.numpy.org/}.
\newblock {Accessed}: 2020-02-20.

\bibitem{Orus14}
Rom{\'a}n Or{\'u}s.
\newblock A practical introduction to tensor networks: Matrix product states
  and projected entangled pair states.
\newblock {\em Annals of Physics}, 349:117--158, 2014.

\bibitem{OD15}
Umut Oztok and Adnan Darwiche.
\newblock A top-down compiler for sentential decision diagrams.
\newblock In {\em Proc. of IJCAI}, pages 3141--3148, 2015.

\bibitem{PGMLJGKLGA19}
Adam Paszke, Sam Gross, Francisco Massa, Adam Lerer, James Bradbury, Gregory
  Chanan, Trevor Killeen, Zeming Lin, Natalia Gimelshein, Luca Antiga, et~al.
\newblock Pytorch: An imperative style, high-performance deep learning library.
\newblock In {\em Proc. of NeurIPS}, pages 8024--8035, 2019.

\bibitem{pearl86}
Judea Pearl.
\newblock Fusion, propagation, and structuring in belief networks.
\newblock {\em Artificial intelligence}, 29(3):241--288, 1986.

\bibitem{PHV14}
Robert Pfeifer, Jutho Haegeman, and Frank Verstraete.
\newblock Faster identification of optimal contraction sequences for tensor
  networks.
\newblock {\em Physical Review E}, 90(3):033315, 2014.

\bibitem{RMGZFZHVL19}
Chase Roberts, Ashley Milsted, Martin Ganahl, Adam Zalcman, Bruce Fontaine,
  Yijian Zou, Jack Hidary, Guifre Vidal, and Stefan Leichenauer.
\newblock Tensornetwork: A library for physics and machine learning.
\newblock {\em arXiv preprint arXiv:1905.01330}, 2019.

\bibitem{RS91}
Neil Robertson and Paul~D Seymour.
\newblock Graph minors. {X}. {Obstructions} to tree-decomposition.
\newblock {\em Journal of Combinatorial Theory, Series B}, 52(2):153--190,
  1991.

\bibitem{RS17}
Elina Robeva and Anna Seigal.
\newblock Duality of graphical models and tensor networks.
\newblock {\em arXiv preprint arXiv:1710.01437}, 2017.

\bibitem{RRBSKH08}
Shane Ryoo, Christopher~I Rodrigues, Sara~S Baghsorkhi, Sam~S Stone, David~B
  Kirk, and Wen-mei~W Hwu.
\newblock Optimization principles and application performance evaluation of a
  multithreaded {GPU} using {CUDA}.
\newblock In {\em Proc. of PPoPP}, pages 73--82, 2008.

\bibitem{SS10}
Marko Samer and Stefan Szeider.
\newblock Algorithms for propositional model counting.
\newblock {\em Journal of Discrete Algorithms}, 8(1):50--64, 2010.

\bibitem{SS10_2}
Marko Samer and Stefan Szeider.
\newblock Constraint satisfaction with bounded treewidth revisited.
\newblock {\em Journal of Computer and System Sciences}, 76(2):103--114, 2010.

\bibitem{SBK05}
Tian Sang, Paul Beame, and Henry Kautz.
\newblock Heuristics for fast exact model counting.
\newblock In {\em Proc. of SAT}, pages 226--240, 2005.

\bibitem{ST94}
Paul~D Seymour and Robin Thomas.
\newblock Call routing and the ratcatcher.
\newblock {\em Combinatorica}, 14(2):217--241, 1994.

\bibitem{SAS94}
Ross~D Shachter, Stig~K Andersen, and Peter Szolovits.
\newblock Global conditioning for probabilistic inference in belief networks.
\newblock In {\em Proc. of UAI}, pages 514--522. Elsevier, 1994.

\bibitem{SWG13}
Blair~D Sullivan, Dinesh Weerapurage, and Chris Gro{\"e}r.
\newblock Parallel algorithms for graph optimization using tree decompositions.
\newblock In {\em 2013 IEEE International Symposium on Parallel \& Distributed
  Processing, Workshops and Phd Forum}, pages 1838--1847. IEEE, 2013.

\bibitem{Tamaki17}
Hisao Tamaki.
\newblock Positive-instance driven dynamic programming for treewidth.
\newblock In {\em Proc. of ESA}, pages 68:1--68:13, 2017.

\bibitem{Thurley2006}
Marc Thurley.
\newblock {SharpSAT}: counting models with advanced component caching and
  implicit {BCP}.
\newblock In {\em Proc. of SAT}, pages 424--429, 2006.

\bibitem{VB17}
Tom~C van~der Zanden and Hans~L Bodlaender.
\newblock {Computing Treewidth on the GPU}.
\newblock In {\em Proc. of IPEC}, volume~89, pages 29:1--29:13, 2017.

\bibitem{VZTGDMVAC18}
Nicolas Vasilache, Oleksandr Zinenko, Theodoros Theodoridis, Priya Goyal,
  Zachary DeVito, William~S Moses, Sven Verdoolaege, Andrew Adams, and Albert
  Cohen.
\newblock Tensor comprehensions: Framework-agnostic high-performance machine
  learning abstractions.
\newblock {\em arXiv preprint arXiv:1802.04730}, 2018.

\bibitem{VBNHRBM19}
Benjamin Villalonga, Sergio Boixo, Bron Nelson, Christopher Henze, Eleanor
  Rieffel, Rupak Biswas, and Salvatore Mandr{\`a}.
\newblock A flexible high-performance simulator for verifying and benchmarking
  quantum circuits implemented on real hardware.
\newblock {\em NPJ Quantum Information}, 5:1--16, 2019.

\bibitem{XLA}
{XLA: Optimizing Compiler for Machine Learning.}
\newblock \url{https://www.tensorflow.org/xla}.
\newblock {Accessed}: 2020-02-20.

\bibitem{XHHL12}
Lin Xu, Frank Hutter, Holger Hoos, and Kevin Leyton-Brown.
\newblock Evaluating component solver contributions to portfolio-based
  algorithm selectors.
\newblock In {\em Proc. of SAT}, pages 228--241. Springer, 2012.

\bibitem{XHHL08}
Lin Xu, Frank Hutter, Holger~H Hoos, and Kevin Leyton-Brown.
\newblock {SATzilla}: portfolio-based algorithm selection for {SAT}.
\newblock {\em Journal of Artificial Intelligence Research}, 32:565--606, 2008.

\bibitem{Ying17}
Lexing Ying.
\newblock Tensor network skeletonization.
\newblock {\em Multiscale Modeling \& Simulation}, 15(4):1423--1447, 2017.

\end{thebibliography}

\normalsize
\newpage
\appendix
\section{Additional Experimental Results}

\begin{figure}[t]
\begin{center}
\input{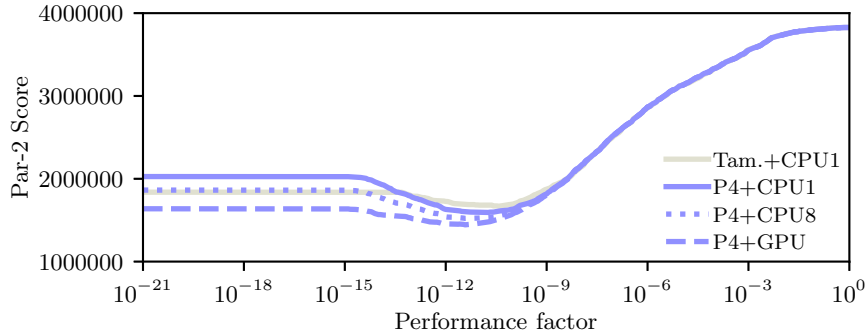}
\caption{\label{fig:performance-factor} A graph of the simulated PAR-2 score for various combinations of planners and hardware as the performance factor varies.}
\end{center}
\end{figure}

Figure \ref{fig:performance-factor} indicates how varying the performance factor affects the simulated PAR-2 score for various combinations of planners and hardware. 

\section{A Proof of Theorem \ref{thm:factorable-branch}}
\label{sec:appendix:proof}

In this section, we present a complete proof of Theorem \ref{thm:factorable-branch}. Note that the proof differs from the proof of Theorem \ref{thm:factorable-tree} in \cite{DDV19} only in Part 1 and Part 4 (and in the definition of $\rho$ in Part 3).

\begin{proof}
The proof proceeds in five steps: (1) compute the factored tensor network $M$, (2) construct a graph $H$ that is a simplified version of the structure graph of $M$, (3) construct a carving decomposition $S$ of $H$, (4) bound the width of $S$, and (5) use $S$ to find a contraction tree for $M$. Working with $H$ instead of directly working with the structure graph of $M$ allows us to cleanly handle tensor networks with free indices.

\textbf{Part 1: Factoring the network.}
Next, for each $v \in \V{G}$, define $T_v$ to be the smallest connected component of $T$ containing $\vinc{G}{v} \subseteq \Lv{T}$. Consider each $A \in N$. If $\tnfree{A} = \emptyset$, let $N_A = \{N_A\}$. Otherwise, observe that $T_A$ is a dimension tree of $A$ and so we can factor $A$ with $T_A$ using Definition \ref{def:tree-factorable} to get a tensor network $N_A$ and a bijection $g_A: \V{T_A} \rightarrow N_A$. Define $M = \cup_{A \in N} N_A$ and let $G'$ be the structure graph of $M$ with free vertex $\fv'$. The remainder of the proof is devoted to bounding the carving width of $G'$. To do this, it is helpful to define $\rho: \V{T} \rightarrow \V{G}$ by $\rho(n) = \{ v \in \V{G} : n \in \V{T_v}, |\vinc{T_v}{n}| = 3\}$. Note that $|\rho(n)| \leq w$ for all $n \in \V{T}$.

\textbf{Part 2: Constructing a simplified structure graph of $M$.} In order to easily characterize $G'$, we define a new, closely-related graph $H$ by taking a copy of $T_v$ for each $v \in \V{G}$ and connecting these copies where indicated by $g$. Formally, the vertices of $H$ are $\{(v, n) : v \in \V{G}, n \in \V{T_v}\}$. For every $v \in \V{G}$ and every arc in $T$ with endpoints $n, m \in \V{T_v}$, we add an edge between $(v, n)$ and $(v, m)$. Moreover, for each $e \in \E{G}$ incident to $v, w \in \V{G}$, we add an edge between $(v, g(e))$ and $(w, g(e))$. 

We will prove in Part 5 that the carving width of $G'$ is bounded from above by the carving width of $H$. We therefore focus in Part 3 and Part 4 on bounding the carving width of $H$. It is helpful for this to define the two projections $\pi_G : \V{H} \rightarrow \V{G}$ and $\pi_T : \V{H} \rightarrow \V{T}$ that indicate respectively the first or second component of a vertex of $H$. 



\textbf{Part 3. Constructing a carving decomposition $S$ of $H$.}
The idea of the construction is, for each $n \in \V{T}$, to attach the elements of $\pi_T^{-1}(n)$ as leaves along the arcs incident to $n$. To do this, for every leaf node $\ell \in \Lv{T}$ with incident arc $a \in \vinc{T}{\ell}$ define $H_{\ell, a} = \pi_T^{-1}(\ell)$. For every non-leaf node $n \in \V{T} \setminus \Lv{T}$ partition $\pi_T^{-1}(n)$ into three sets $\{H_{n,a} : a \in \vinc{T}{n}\}$, ensuring that the degree 3 vertices are divided evenly (the degree 1 and 2 vertices can be placed arbitrarily). Observe that $\{H_{n,a} : n \in \V{T}, a \in \vinc{T}{n}\}$ is a partition of $\V{H}$, and there are at most $\ceil{|\rho(n)|/3}$ vertices of degree 3 in each $H_{n,a}$, since there are exactly $|\rho(n)|$ vertices of degree 3 in $\pi_T^{-1}(n)$. 

We use this to construct a carving decomposition $S$ from $T$ by adding each element of $H_{n,a}$ as a leaf along the arc $a$. Formally, let $x_v$ denote a fresh vertex for each $v \in \V{H}$, let $y_n$ denote a fresh vertex for each $n \in \V{T}$, and let $z_{n,a}$ denote a fresh vertex for each $n \in \V{T}$ and $a \in \vinc{T}{n}$. Define $\V{S}$ to be the union of $\V{H}$ with the set of these free vertices. 

We add an arc between $v$ and $x_v$ for every $v \in \V{H}$. Moreover, for every $a \in \E{T}$ with endpoints $o, p \in \einc{T}{a}$ add an arc between $y_{o,a}$ and $y_{p,a}$. For every $n \in \V{T}$ and incident arc $a \in \vinc{T}{n}$, construct an arbitrary sequence $I_{n,a}$ from $\{x_v : v \in H_{n,a}\}$. If $H_{n,a} = \emptyset$ then add an arc between $y_n$ and $z_{n,a}$. Otherwise, add arcs between $y_n$ and the first element of $I$, between consecutive elements of $I_{n,a}$, and between the last element of $I_{n,a}$ and $z_{n,a}$. 

Finally, remove the previous leaves of $T$ from $S$. The resulting tree $S$ is a carving decomposition of $H$, since we have added all vertices of $H$ as leaves and removed the previous leaves of $T$.

\textbf{Part 4: Computing the width of $S$.} In this part, we separately bound the width of the partition induced by each of the three kinds of arcs in $S$.

First, consider an arc $b$ between some $v \in \V{H}$ and $x_v$. Since all vertices of $H$ are degree 3 or smaller, $b$ defines a partition of width at most $3 \leq \ceil{4w/3}$.

Next, consider an arc $c_a$ between $y_{o,a}$ and $y_{p,a}$ for some arc $a \in \E{T}$ with endpoints $o, p \in \einc{T}{a}$.
Observe that removing $a$ from $T$ defines a partition $\{B_o, B_p\}$ of $\V{T}$, denoted so that $o \in B_o$ and $p \in B_p$. 

Then removing $c_a$ from $S$ defines the partition $\{ \pi_T^{-1}(B_o), \pi_T^{-1}(B_p) \}$ of $\Lv{S}$. By construction of $H$, all edges between $\pi_T^{-1}(B_o)$ and $\pi_T^{-1}(B_p)$ are between $\pi_T^{-1}(o)$ and $\pi_T^{-1}(p)$. Observe that every edge $e \in \E{H}$ between $\pi_T^{-1}(o)$ and $\pi_T^{-1}(p)$ corresponds under $g_v$ to $a$ in $T_v$ for some $v$. It follows that the number of edges between $\pi_T^{-1}(o)$ and $\pi_T^{-1}(o)$ is exactly the number of vertices in $G$ that are endpoints of edges in both $C_a$ and $\E{G} \setminus C_a$, which is bounded by $w$. Thus the partition defined by $c_a$ has width no larger than $w$. 

Finally, consider an arc $d$ added as one of the sequence of $|H_{n,a}|+1$ arcs between $y_n$, $I_{n,a}$, and $z_{n,a}$ for some $n \in \V{T}$ and $a \in \vinc{T}{n}$. Some elements of $H_{n,a}$ have changed blocks from the partition defined by $c_a$. Each vertex of degree 2 that changes blocks does not affect the width of the partition, but each vertex of degree 3 that changes blocks increases the width by 1. There are at most $\ceil{w/3}$ elements of degree 3 added as leaves between $y_n$ and $z_{n,a}$. Thus the partition defined by $d$ has width at most $w + \ceil{w/3} = \ceil{4w/3}$.

It follows that the width of $S$ is at most $\ceil{4w/3}$.

\textbf{Part 5: Bounding the max-rank of $M$.} Let $\fv$ be the free vertex of the structure graph of $N$. We first construct a new graph $H'$ from $H$ by, if $\tnfree{N} \neq \emptyset$, contracting all vertices in $\pi_G^{-1}(\fv)$ to a single vertex $\fv$. If $\tnfree{N} = \emptyset$, instead add $\fv$ as a fresh degree 0 vertex to $H'$. Moreover, for all $A \in N$ with $\tdim{A} = \emptyset$ add $A$ as a degree 0 vertex to $H'$. 

Note that adding degree 0 vertices to a graph does not affect the carving width. Moreover, since $|\tnfree{N}| \leq 3$ all vertices (except at most one) of $\pi_G^{-1}(\fv)$ are degree 2 or smaller. It follows that contracting $\pi_G^{-1}(\fv)$ does not increase the carving width. Thus the carving width of $H'$ is at most $\ceil{4w/3}$.

Moreover, $H'$ and $G'$ are isomorphic. To prove this, define an isomorphism $\phi: \V{H'} \rightarrow \V{G'}$ between $H'$ and $G'$ by, for all $v \in \V{H'}$:
$$\phi(v) \equiv \begin{cases}v&\text{if}~v \in N~\text{and}~\tdim{v}=\emptyset\\\fv'&v=\fv\\g_{\pi_G(v)}(\pi_T(v))&\text{if}~v \in \V{H}~\text{and}~\pi_G(v) \in N\end{cases}$$
$\phi$ is indeed an isomorphism between $H'$ and $G'$ because the functions $g_A$ are all isomorphisms and because an edge exists between $\pi_G^{-1}(v)$ and $\pi_G^{-1}(w)$ for $v, w \in \V{G}$ if and only if there is an edge between $v$ and $w$ in $G$. Thus the carving width of $G'$ is at most $\ceil{4w/3}$. By Theorem 3 of \cite{DDV19}, then, $M$ has a contraction tree of max rank no larger than $\ceil{4w/3}$.
\hfill$\square$
\end{proof}
\end{document}